\newcommand{\trp}{\scriptscriptstyle\top}
\begin{document}

\title
{Distributed Adaptive Coverage Control 
of Differential Drive Robotic Sensors 
}

\titlerunning{Distributed Adaptive Coverage of Diff. Drive Robots}        

\author{Rihab Abdul Razak         \and
	Srikant Sukumar  \and
        Hoam Chung 
}


\institute{R. Abdul Razak \at
              IITB-Monash Research Academy \\
              Tel.: +91-7592039470\\
              \email{rihab@sc.iitb.ac.in}.           
	   \and
	   S. Srikant \at
	      Dept. of Systems and Control Engg., Indian Institute of Technology Bombay.
           \and
           H. Chung \at
              Dept. of Mechanical and Aerospace Engg., Monash University.
}

\date{}

\maketitle

\begin{abstract}
This paper is concerned with the deployment of multiple mobile
robots in order to autonomously cover a region $\mathcal{Q}$.
The region to be covered is described using a density function
which may not be apriori known.
In this paper, we pose the coverage problem as an optimization
problem over some space of functions on $\mathcal{Q}$. In
particular, we look at $\mathcal{L}^2$-distance based coverage
algorithm and derive adaptive control laws for the same. We also
propose a modified adaptive control law incorporating consensus
for better parameter convergence. We implement the algorithms on
real differential drive robots with both simulated density
function as well as density function implemented using light
sources. We also compare the $\mathcal{L}^2$-distance based
method with the locational optimization method using experiments. 
\keywords{ Distributed control, Adaptive control,
Coverage Control, Parameter convergence}
\end{abstract}

\section{Introduction}
\label{sec:intro}
Cooperative control problems involving multi-agent systems have
been widely studied in the literature. We consider multiple
autonomous agents that work together to achieve an objective.
Objectives include rendezvous where the agents try to converge
to a common state, formation control where the agents try to
maintain a given spatial formation and coverage control where
the agents are deployed to cover a given region of interest.
See
\citet{jadbabaie2003coordination, murray2007recent,
olfati2007consensus, tanner2007flocking,
cortes2004coverage,
DistCtrlRobotNetw, Song2011, Song2013}.
These cooperative control algorithms find applications in
surveillance, patrolling, environmental monitoring and sensing etc.
\par The agents communicate with each other using some
communication topology which is described in terms
of a graph where the nodes correspond to the agents and two nodes
are connected if the two corresponding agents can communicate
with each other. In most cases, communication graphs correspond
to proximity graphs meaning that two agents communicate if they
are close to each other. This also motivates the use of
decentralized or distributed control strategies for efficient
solution of multiagent problems where the control laws of
individual agents are determined by the information exchange
with their neighbouring agents See for example
\citet{murray2007recent, olfati2007consensus, DistCtrlRobotNetw}.
\par In this paper, we consider the problem of optimally covering
a given region using multiple agents to sense a phenomenon/event of
interest. The event of interest is described by a
\textit{density function} over the region. The density function
can be thought of as giving the distribution of the intensity
of the phenomenon to be sensed. For example, in case of mobile
agents deployed to sense nuclear radiation over a region, the
density function could be the intensity of radiation over the
region. In this case, we would like the mobile agents which are
deployed starting at some initial position to converge to some
optimal configuration for sensing purpose.
\par The coverage problem in the locational optimization
framework was investigated in in~\citet{cortes2004coverage}, where
$n$ agents are deployed to cover a convex region
$\mathcal{Q}\subset\mathbb{R}^q$. The problem was solved for
agents with single integrator dynamics and known density
function. In \citet{SchwagerEtalICRA07} and \citet{Schwager2009},
the authors extend the algorithm of \citet{cortes2004coverage}
using adaptive control for the case where the density function
is not fully known. The density function is assumed to be
linearly parameterized in terms of a vector of unknown constant
parameters. They also propose a consensus term in the adaptation
law which improves parameter convergence. In
\citet{cortes2005geometric}, the authors discuss spatial
optimization problems closely related to coverage problems
using gradient descent methods. In
\citet{hexsel2011,guruprasad2013,bopardikar2018arXiv},
the authors talk about different versions of the
coverage problem based on locational optimization by using
different assumptions on the agent sensing capabilities. The
adaptive coverage algorithms have been extended to nonholonomic
robots in \citet{Luna2013,rihab2018a}.
In many problems, it would be also be beneficial to estimate
the parameters of the density function correctly along with
the coverage task. In this case, the issue of parameter
estimation becomes important. The parameters converge to
true values provided an integral condition over the trajectories
of the agents are satisfied which may not always be achieved.
There are not many works in the literature which focus on the
issue of parameter estimation in the context of coverage control.
\par In this work, we pose the coverage problem in a general
framework. The agents located at different positions in the
domain can be thought of as defining an \textit{agent density
function}. The coverage problem can then be posed as an
optimization problem which seeks to minimize the \textit{distance}
between the original density function and the agent density
function for some appropriately defined distance. This approach
is more general in the sense that the locational optimization
problem can be viewed as a special case of this formulation.
We in addition look at the
$\mathcal{L}^2$-metric for achieving coverage and derive adaptive
control laws for differential drive robots. We also present a
slightly modified adaptive control law using consensus over
directed sub-graphs of the delaunay graph for improving parameter
convergence. The algorithms described are tested on actual
differential drive robots and a comparison is given between the
$\mathcal{L}^2$-distance based approach and the locational
optimization approach in \citet{cortes2004coverage,rihab2018a}.
We are particularly interested in studying how the two approaches
handle the problem of parameter estimation. A part of the current
work has been presented at \citet{rihab2018b}.
\par In section \ref{sec:problem}, we formulate and discuss the
coverage problem. In section \ref{sec:l2coverage}, we discuss a
new objective function based on $\mathcal{L}^2$-distance for
achieving coverage. In section \ref{sec:control} we derive
control and adaptation laws for differential drive robots to
converge to near optimal configuration. We also discuss some
improvements to the parameter adaptation law which gives better
parameter convergence. In section \ref{sec:hardware} we discuss
experimental results on differential drive robots and also give
a comparison of the $\mathcal{L}^2$ method with the locational
optimization. Finally we conclude the paper with section
\ref{sec:conclusions}.

\section{Problem Formulation}
\label{sec:problem}
We consider a bounded convex region
$\mathcal{Q} \subset \mathbb{R}^n$ where $N$ agents are deployed
so as to spread and distribute themselves in an optimal manner
with respect to a density function
$\phi: \mathcal{Q} \to \mathbb{R}_+$ where $\mathbb{R}_+$ is the
set of positive real numbers. The density function $\phi(.)$
represents the event of interest with respect to which coverage
is to be obtained (also called $\phi(.)$ the target density).
Intuitively we want more robots to be concentrated over regions
having higher values of the density function. The positions of
the agents are denoted by $p_i$ for $i=1,2,\dots,N$ and the set
of all agent positions is denoted by $P$. For a given
configuration of agents $P = \{p_i\}_{i=1}^N$, we define the
voronoi partition of $\mathcal{Q}$ as the set
$\{\mathcal{V}_i\}_{i=1}^N$ where
\begin{equation}
 \centering
 \mathcal{V}_i = \{ q \in \mathcal{Q} \, \mid \, \|q-p_i\| \leq
 \|q-p_j\|, \,\, \forall j \neq i \}.
 \label{eqn:voronoi}
\end{equation}
The voronoi cell $\mathcal{V}_i$ is the set of all points closest
to agent $i$ compared to all the other agents. In
\citet{cortes2004coverage}, the optimal coverage configuration is
described as the optimum of the locational optimization cost
\begin{equation}
 \centering
 \mathcal{H}(P) = \sum_{i=1}^N \int\limits_{\mathcal{V}_i}
 \|p_i - q\|^2 \phi(q) dq
\end{equation}
In this paper, we formulate the coverage problem in a more
general framework, as an optimization problem over a space of
functions over $\mathcal{Q}$.

\subsection{Distance Function based Approach}
\label{sec:distancebasedapproach}

Each agent is assumed to have a sensing capability which
decreases with the distance from the agent location. We quantify
this sensing capability of each agent as the
\textit{sensing function} denoted by $f_i(p_i,q)$. The sensing
function describes the capability of the agent at position $p_i$
to sense event at point $q$. In this work we assume isotropic
sensors whose sensing is independent of direction. We can thus
represent the sensing function as $f_i(\|p_i-q\|)$. We require
$f_i : \mathbb{R}_+ \to \mathbb{R}_+$ to be an appropriate
decreasing function of its argument. An illustration for the
one-dimensional case is shown in figure \ref{fig:problem}.
\begin{figure}[h]
 \centering
 \includegraphics
 [width=0.88\textwidth,height=0.22\textheight]
 {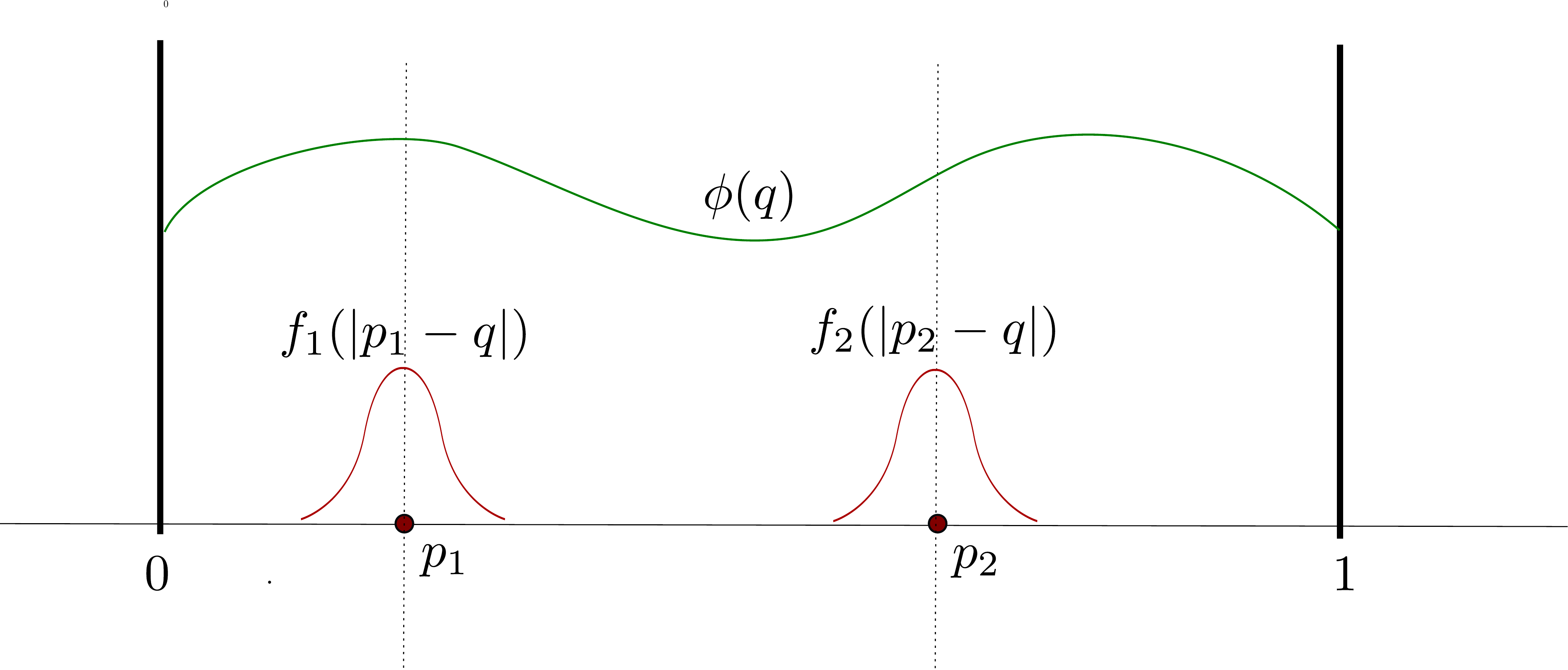}
 \caption{Two agents covering the interval $\mathcal{Q} = [0,1]$}
 \label{fig:problem}
\end{figure}
Given $N$ agents each with its sensing functions, we define an
\textit{aggregate agent density function} defined by
\[
 f_P(q) :=
 \alpha(f_1(\|p_1-q\|),f_2(\|p_2-q\|),\dots,f_N(\|p_N-q\|))
\]
where $\alpha : \mathbb{R}_+^N \to \mathbb{R}_+$ is an aggregation
function which is to be chosen appropriately. The aggregate agent
density gives a measure of the quality of sensing of the region
$\mathcal{Q}$ using all the $N$ agents. The general coverage
problem can then be posed as the following optimization problem:
\begin{equation}
  \min_{P=\{p_1,p_2,\dots,p_N\}} d(\phi,f_P)
  \label{eqn:problem}
\end{equation}
where $d(.,.)$ is an appropriate measure of divergence or distance
between $\phi$ and $f_P$. The agents should move to a
configuration such that the \textit{distance} between the target
density $\phi$ and the aggregate agent density $f_P$ is minimized.
The choice of $d(.,.)$ will generally be determined by the class
of functions we work with which depends on the nature of the
coverage density function $\phi(.)$ as well as the nature of
sensors in the mobile agents employed. The optimal locations of
the agents is then given by
\begin{equation}
 P^* = \{p_1^*,p_2^*,\dots,p_N^*\} =
 \mbox{arg} \min_{P} d(\phi,f_P)
\end{equation}

\subsection{Choice of the agent sensing functions and
aggregate density}
\label{sec:choiceofdensity}
The agent sensing functions $f_i(\|p_i-q\|)$ as mentioned above
is required to be a non-increasing function of its argument, and
depends on the nature of sensors. Some examples which we can be
used are given below:
\begin{enumerate}
 \item Gaussian function
 \begin{equation}
  f_i(x) =
  A_i \exp\left( -\frac{x^2}{\sigma_i^2} \right)
  \label{eqn:agentdensity_gaussian}
 \end{equation}
 for some constants $A_i > 0$, $\sigma_i > 0$.
 \item Constant sensing function
 \begin{equation}
  f_i(x) = \mathcal{\chi}_{\{x \leq r_i\}}
  \label{eqn:agentdensity_constant}
 \end{equation}
 where $\mathcal{\chi}_S$ is the characteristic function of the
 set $S$ and $r_i > 0$ is some constant.
 \item Quartic function
 \begin{equation}
  f_i(x) =
  \left\{ \begin{array}{ll}
  \frac{M}{r_i^4} \left( x^2-r_i^2 \right)^2
  & \mbox{ if } x < r_i \\
  0 & \mbox{ if } x \geq r_i
  \end{array} \right.
  \label{eqn:agentdensity_quartic}
 \end{equation}
 for some constants $M > 0, r_i > 0$.
 \item Bump function
 \begin{equation}
  f_i(x) =
  \left\{ \begin{array}{ll}
  A_i \exp\left\{-\frac{1}{1-\left(\frac{x}{r_i}\right)^2}\right\}
  & \mbox{ if } x < r_i \\
  0 & \mbox{ if } x \geq r_i
  \end{array} \right.
  \label{eqn:agentdensity_bump}
 \end{equation}
 for some constant $A_i > 0$, $r_i > 0$.
\end{enumerate}
The constant sensing function, the bump function and the quartic
function (unlike the Gaussian function) allow us to model sensors
with a finite sensing range since they take zero value outside a
finite region of radius $r_i$ around the agent position.
Accordingly, we consider these functions as examples of
\textit{limited range} sensing functions, and the Gaussian
function as an example of a \textit{full range} sensing function.
The choice of aggregate agent density $f_P(\cdot)$ could be made in
different ways. We give here two natural choices:
\begin{enumerate}
 \item Average/sum function
 \begin{equation}
  f_P(q) = \gamma \sum_{i=1}^{N} f_i(\|p_i - q\|)
  \label{eqn:avgaggregatefcn}
 \end{equation}
 \item Max function
 \begin{equation}
  f_P(q) = \gamma \max_{i} f_i(\|p_i - q\|)
  \label{eqn:maxaggregatefcn}
 \end{equation}
\end{enumerate}
where $\gamma$ is some positive constant. The max function
defines the aggregate density at a point as the value of the
agent sensing function with the maximum value at the point.
If the agents are identical, then the aggregate density at a
point in this case is the agent density corresponding to the
agent that is closest to the point. We will see that this
naturally leads to a partition of the region $\mathcal{Q}$ and
allows using distributed control schemes.

\subsection{Locational Optimization}
\label{sec:1-2}
Now, we consider the locational optimization problem defined by
the cost function [see \citet{cortes2004coverage}]
\begin{align}
 \mathcal{H}(p_1,\dots,p_N) &=
 \int_Q \min_i \|p_i-q\|^2 \phi(q) dq \\
 &= \sum_{i=1}^N \int_{\mathcal{V}_i} \|p_i-q\|^2 \phi(q) dq
\end{align}
where the domain $\mathcal{Q}$ is assumed to be compact and
convex, $p_i$ are the locations of the agents and
\{$\mathcal{V}_i$\} is the voronoi partition of $\mathcal{Q}$
corresponding to $\{p_i\}$. The minimizer of the function
$\mathcal{H}$ correspond to the centroidal voronoi configuration.
These are the points $p_i^*$ such that $p_i^* = C_{\mathcal{V}_i}$
where $C_{\mathcal{V}_i} = (\int_{\mathcal{V}_i} q \phi(q) dq)/
(\int_{\mathcal{V}_i} \phi(q) dq)$.
\par We wish to pose the locational optimization problem in the
general framework discussed above (equation (\ref{eqn:problem})).
In order to do that we consider the relative entropy or
Kullback-Leibler divergence of two integrable functions
$f:\mathcal{Q} \to \mathbb{R}_{+}$ and
$g:\mathcal{Q} \to \mathbb{R}_{+}$ where $g(q)>0$, defined as
\begin{equation}
 d_{KL}(f,g) = \int_Q f(q) \log(\frac{f(q)}{g(q)}) dq
\end{equation}
$d_{KL}$ is not a metric (since it is not symmetric) but it is
a measure of divergence (distance) of $g$ with respect to $f$
and $d_{KL}(f,f)=0$
[see \citet{kullback1951,numericalrecipes2007}].
{
If we interpret the density function $\phi$ as defining the
probability density of events occurring in the domain
$\mathcal{Q}$, and agent sensing functions as defining the
probability of detecting an event, then minimizing
$d_{KL} (\phi,f_P)$ would mean that we are choosing those agent
positions which results in maximum likelihood of detection of
these events.
}
\begin{proposition}
Assume that
\begin{itemize}
 \item The target density $\phi(.)$ is integrable.
 \item The sensing functions of the agents are gaussian
 (equation (\ref{eqn:agentdensity_gaussian})) with $A_i = 1$ and
 $\sigma_i = 1$.
 \item The aggregate density function of the agents is given by
 the max function (\ref{eqn:maxaggregatefcn}) with $\gamma = 1$.
\end{itemize}
Under the above assumptions,
\begin{equation*}
 \underset{P}{\arg\min} \,\,\, d_{KL}(\phi,f_P) =
 \underset{P}{\arg\min} \,\,\, \mathcal{H}(P)
\end{equation*}
i.e. minimizing $d_{KL}(\phi,f_P)$ is equivalent to minimizing
$\mathcal{H}(P)$.
\end{proposition}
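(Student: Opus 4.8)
The plan is to show that $d_{KL}(\phi,f_P)$ and $\mathcal{H}(P)$ differ by a term that does not depend on the optimization variable $P$, so that the two functionals have identical minimizers. First I would expand the divergence using its definition,
\[
 d_{KL}(\phi,f_P) = \int_Q \phi(q)\log\phi(q)\,dq \; - \; \int_Q \phi(q)\log f_P(q)\,dq ,
\]
and observe that the first integral depends on $\phi$ alone, hence is a constant with respect to $P$.

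Next I would simplify the second integral under the stated hypotheses. With $A_i=\sigma_i=1$ the sensing functions are $f_i(\|p_i-q\|)=\exp(-\|p_i-q\|^2)$, and since $\exp(\cdot)$ is strictly increasing the max-aggregation with $\gamma=1$ gives
\[
 f_P(q) = \max_i \exp\!\left(-\|p_i-q\|^2\right) = \exp\!\left(-\min_i \|p_i-q\|^2\right).
\]
Therefore $\log f_P(q) = -\min_i\|p_i-q\|^2$, which is well defined and finite everywhere because the Gaussian is strictly positive, and
\[
 -\int_Q \phi(q)\log f_P(q)\,dq = \int_Q \phi(q)\,\min_i\|p_i-q\|^2\,dq = \mathcal{H}(P).
\]
Combining the two steps yields $d_{KL}(\phi,f_P) = \mathcal{H}(P) + C$, where $C=\int_Q\phi\log\phi\,dq$ is independent of $P$, from which $\arg\min_P d_{KL}(\phi,f_P)=\arg\min_P\mathcal{H}(P)$ follows immediately.

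The argument is essentially a one-line computation; the only point requiring care is the well-definedness of the constant $C$, since for a merely integrable $\phi$ the integral $\int_Q\phi\log\phi$ need not be finite. I would deal with this purely by bookkeeping: $\mathcal{H}(P)$ is finite for every configuration (on the compact set $\mathcal{Q}$ the quantity $\min_i\|p_i-q\|^2$ is bounded and $\phi$ is integrable), so one can compare two arbitrary configurations directly and write $d_{KL}(\phi,f_P)-d_{KL}(\phi,f_{P'}) = \mathcal{H}(P)-\mathcal{H}(P')$, which makes the cancellation of the $\phi\log\phi$ term rigorous without ever evaluating it. Alternatively, one simply adds the mild assumption that $\phi$ has finite differential entropy. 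Beyond this, there is no substantive obstacle.
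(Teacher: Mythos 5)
Your proposal is correct and follows essentially the same route as the paper: split $d_{KL}(\phi,f_P)$ into $\int_Q\phi\log\phi\,dq$ (independent of $P$) minus $\int_Q\phi\log f_P\,dq$, use monotonicity of $\log$/$\exp$ to turn $\log\max_i\exp(-\|p_i-q\|^2)$ into $-\min_i\|p_i-q\|^2$, and identify the remaining term with $\mathcal{H}(P)$. Your additional remark on the possible non-finiteness of $\int_Q\phi\log\phi\,dq$ for merely integrable $\phi$ (handled by comparing differences of configurations) is a minor refinement the paper glosses over, but the substance of the argument is the same.
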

\begin{proof}
For each agent $i$
\[
 f_i(\|p_i-q\|) = \exp(-\|p_i-q\|^2)
\]
and the aggregate density is given by
\[
 f_P(q) = \max_i \, \exp(-\|p_i-q\|^2)
\]
Then,
\begin{align*}
 d_{KL}(\phi,f_P) &=
 \int_Q \phi(q) \log\frac{\phi(q)}{f_P(q)} dq \\
 &= \int_Q \phi(q) \log \phi(q) dq -
 \int_Q \phi(q) \log f_P(q) dq
\end{align*}
Since the first term is independent of $P$, we have
\begin{align*}
 \min_{P} d_{KL}(\phi,f_P) &=
 \min_{P} -\int_Q \phi(q) \log f_P(q) dq  \\
 &=\min_{P} -\int_Q \phi(q) \log \max_i f_i(\|p_i-q\|) dq \\
 &=\min_{P} -\int_Q \phi(q) \max_i \log f_i(\|p_i-q\|) dq \\
 &=\min_{P} \int_Q \phi(q) \min_i \|p_i-q\|^2 dq \\
 &=\min_{P} \mathcal{H}(P)
\end{align*}
\end{proof}
It can be shown [\citet{cortes2004coverage}] that for agents with
single integrator dynamics $\dot{p}_i = u_i$, the gradient
control law given by
\[
 u_i = -\frac{\partial\mathcal{H}}{\partial p_i}
 = -kM_{\mathcal{V}_i}(p_i - C_{\mathcal{V}_i})
\]
allows the agents to coverge to the centroidal voronoi
configuration.

\section{$\mathcal{L}^2$-distance based coverage}
\label{sec:l2coverage}
We assume that the target density $\phi$ belong to
$\mathcal{L}^2(\mathcal{Q})$. We further assume that $\phi$ is
lower bounded, i.e., $\phi(q) \geq \beta$ for some constant
$\beta > 0$. Assuming that the aggregate density of the agents
also belong to $\mathcal{L}^2(\mathcal{Q})$, we can define the
following cost function
\begin{equation}
 d_2(\phi,f_P) = \int_Q | \phi(q) - f_P(q) |^2 dq
 \label{eqn:d2}
\end{equation}
We now investigate the multiagent coverage problem using the
above cost function.
Any minima of the cost function $d_2(.,.)$ with respect to $P$
satisifies the condition $\frac{\partial d_2}{\partial p_i} = 0$
for every $i$.
\noindent Let $\mathcal{Q} \subset \mathbb{R}^n$ be a convex and
bounded region with $N$ agents and let
\begin{equation*}
 \begin{aligned}
  f_P(q) &= \gamma \max_i f(\| p_i-q \|)
 \end{aligned}
\end{equation*}
where the agent sensing function $f(.)$ is assumed to be the
same for all the agents.
Then
\begin{equation}
 \begin{aligned}
  d_2(\phi,f_P) &= \int_{\mathcal{Q}}
  \left| \phi(q) - f_P(q) \right|^2 dq \\
  &= \int_{\mathcal{Q}} \left| \phi(q) -
  \gamma \left[\max_i f(\|p_i - q\|)\right] \right|^2 dq
 \end{aligned}
\end{equation}
Since $f(.)$ is a decreasing function of its argument, we can
write the above as
\begin{equation}
 d_2(\phi,f_P) = \sum_{i=1}^N \int_{\mathcal{V}_i}
 \left| \phi(q) - \gamma . f(\|p_i-q\|) \right|^2 dq
 \label{eqn:costl2}
\end{equation}
where $\{\mathcal{V}_i\}_{i=1}^N$ is the voronoi partition
defined by equation (\ref{eqn:voronoi}).
\begin{lemma}
 The gradient of the $\mathcal{L}^2$ cost function
 (\ref{eqn:costl2}) with respect to $p_i$ is given by
 \begin{equation}
  \frac{\partial d_2}{\partial p_i} =
  \int_{\mathcal{V}_i} \frac{\partial}{\partial p_i}
  \left| \phi(q) - \gamma.f(\|p_i-q\|) \right|^2 dq
  \label{eqn:derivd2}
 \end{equation}
\end{lemma}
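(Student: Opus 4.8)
The plan is to treat $d_2$ as a sum of integrals over the \emph{moving} Voronoi cells and to apply the generalized Leibniz (Reynolds transport) rule, showing that every boundary contribution generated by the motion of the Voronoi facets cancels in pairs. Write $d_2(\phi,f_P) = \sum_{j=1}^N \int_{\mathcal{V}_j} g_j(q,p_j)\,dq$ with $g_j(q,p_j) := |\phi(q) - \gamma f(\|p_j - q\|)|^2$. When we differentiate with respect to $p_i$, the integrand $g_j$ depends explicitly on $p_i$ only when $j = i$, whereas the region $\mathcal{V}_j$ depends on $p_i$ exactly when $\mathcal{V}_j$ shares a facet with $\mathcal{V}_i$ (i.e.\ when $j$ is a Voronoi/Delaunay neighbour of $i$); cells that are not neighbours of $i$ are entirely unaffected by a perturbation of $p_i$.

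First I would record the precise form of the Leibniz identity for a region $\mathcal{V}_j$ whose boundary moves with the parameter $p_i$:
\[
 \frac{\partial}{\partial p_i}\int_{\mathcal{V}_j} g_j\,dq
 = \int_{\mathcal{V}_j}\frac{\partial g_j}{\partial p_i}\,dq
 + \int_{\partial \mathcal{V}_j} g_j \,\Bigl(\mathbf{n}_j \cdot \tfrac{\partial q}{\partial p_i}\Bigr)\, d\sigma ,
\]
where $\mathbf{n}_j$ is the outward unit normal of $\mathcal{V}_j$ and $\partial q/\partial p_i$ is the velocity of a boundary point under the perturbation (read componentwise in $p_i$). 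The boundary $\partial \mathcal{V}_j$ decomposes into facets shared with other cells and facets lying on $\partial \mathcal{Q}$; the latter do not move with $p_i$ and contribute nothing, and among the former only the facet $\Delta_{ij} := \partial \mathcal{V}_i \cap \partial \mathcal{V}_j$ moves when $p_i$ is perturbed, since a bisector between two indices both different from $i$ is independent of $p_i$.

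The heart of the argument is the pairwise cancellation on the $\Delta_{ij}$. Such a facet contributes a surface term to the derivative of $\int_{\mathcal{V}_i}g_i$ (as part of $\partial\mathcal{V}_i$, from the $j=i$ summand) and a surface term to the derivative of $\int_{\mathcal{V}_j}g_j$. Along $\Delta_{ij}$ we have $\|p_i - q\| = \|p_j - q\|$ by the definition (\ref{eqn:voronoi}) of the Voronoi partition, hence $g_i(q,p_i) = g_j(q,p_j)$ there; moreover $\mathbf{n}_i = -\mathbf{n}_j$ on $\Delta_{ij}$ while the boundary-velocity field $\partial q/\partial p_i$ is the same object seen from either side. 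Thus the two surface integrals over $\Delta_{ij}$ are negatives of each other and cancel. Summing over all neighbours $j$ of $i$, every moving facet cancels, and since $\partial g_j/\partial p_i = 0$ for $j\neq i$ we are left with
\[
 \frac{\partial d_2}{\partial p_i}
 = \sum_{j=1}^N \int_{\mathcal{V}_j}\frac{\partial g_j}{\partial p_i}\,dq
 = \int_{\mathcal{V}_i}\frac{\partial}{\partial p_i}\bigl|\phi(q) - \gamma f(\|p_i - q\|)\bigr|^2\,dq ,
\]
which is exactly (\ref{eqn:derivd2}).

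The main obstacle is not the algebra but making the moving-domain differentiation rigorous: one must justify that the cells depend on $P$ in a piecewise-$C^1$ fashion (true because the facets are pieces of affine bisector hyperplanes), that the degenerate configurations where more than $n+1$ cells meet form a negligible set that may be ignored, and that $\phi\in\mathcal{L}^2(\mathcal{Q})$ together with the smoothness of $f$ provides enough integrability to pass the derivative under the integral sign and to make the facet surface integrals well defined (here only local integrability of $\phi$ is used — the lower bound $\phi\geq\beta$ is not needed for this identity). This is the same mechanism used in \citet{cortes2004coverage} for the locational cost $\mathcal{H}$, and I would invoke that computation for the technical underpinnings while spelling out the cancellation above.
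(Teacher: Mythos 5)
Your proposal is correct and follows essentially the same route as the paper: differentiate under the moving Voronoi cells via the generalized Leibniz rule, discard the non-moving $\partial\mathcal{Q}$ portions, and cancel the facet contributions pairwise using opposite outward normals. You are in fact slightly more explicit than the paper in noting that the cancellation also needs the integrands to agree on the shared facet (because $\|p_i-q\|=\|p_j-q\|$ there), a point the paper leaves implicit.
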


\begin{proof}
Computing the gradient with respect to $p_i$,
\begin{equation}
 \begin{aligned}
  \frac{\partial d_2}{\partial p_i} &=
  \frac{\partial}{\partial p_i} \int_{\mathcal{V}_i}
  \left| \phi(q) - \gamma . f(\|p_i-q\|) \right|^2 dq \\
  & \,\, + \sum_{j \in \mathcal{N}_i}
  \frac{\partial}{\partial p_i}
  \int_{\mathcal{V}_j} \left| \phi(q) - \gamma .
  f(\|p_j-q\|) \right|^2 dq
 \end{aligned}
\end{equation}
It should be noted that in the expression above the regions of
integration $\mathcal{V}_i$ and $\mathcal{V}_j$ are themselves
functions of $p_i$. Thus we have
[see \citet{Cortes2002,flanders73}]
\begin{equation}
 \begin{aligned}
  \frac{\partial d_2}{\partial p_i} &=
  \int_{\mathcal{V}_i} \frac{\partial}{\partial p_i}
  \left| \phi(q) - \gamma.f(\|p_i-q\|) \right|^2 dq \\
  & \,\, + \int_{\partial \mathcal{V}_i}
  \left| \phi(q) - \gamma.f(\|p_i-q\|) \right|^2
  n_{iq}^{\trp} \frac{\partial q}{\partial p_i} dq \\
  & \,\, + \sum_{j \in \mathcal{N}_i}
  \int_{c_{ij}} \left| \phi(q) - \gamma.f(\|p_j-q\|)
  \right|^2 n_{jq}^{\trp} \frac{\partial q}{\partial p_i} dq
 \end{aligned}
\end{equation}
where $\partial \mathcal{V}_i$ is the boundary of the voronoi
region $\mathcal{V}_i$ , $\mathcal{N}_i$ is the set of voronoi
neighbours of agent $i$, $n_{iq}$ is the unit normal at point
$q$ pointing outward from the region $\mathcal{V}_i$ and
$c_{ij}$ is the boundary segment of voronoi region
$\mathcal{V}_i$ that is shared with the voronoi region of agent
$j$ ($\mathcal{V}_j$) [see also \citet{Cortes2002}].
The boundary $\partial \mathcal{V}_i$ consists of segments
$c_{ij}$ and possibly parts of $\partial \mathcal{Q}$
(the boundary of $\mathcal{Q}$).
The integrand of the second term is zero over
$\partial \mathcal{Q}$. Thus the second and third terms in the
above expression cancel each other (since the outward normals
$n_{iq}$ and $n_{jq}$ in the two terms point opposite to each
other) and the proof is complete.
\end{proof}

As a simple example, consider $\mathcal{Q}=[0,1]$
with one agent with $f(|p-q|) = \exp\{-|p-q|^2\} =: f_P(q)$.
Assume the density function is given by
$\phi(q) = \exp\{-|c-q|^2\}$ for some $c \in (0,1)$. Minimizing
the $\mathcal{L}^2$ norm implies
$\frac{\partial d_2}{\partial p} = 0$
which implies from (\ref{eqn:derivd2}) that $p=c$. \\

\subsection{Gaussian sensing function}
\par Now consider the Gaussian agent density function,
$f(\|p_i-q\|)=\exp\{ -\frac{\|p_i - q\|^2}{\sigma^2}
\},\,\, i=1,2,\dots,N$.
In this case the gradient becomes
\begin{equation*}
 \begin{aligned}
  \frac{\partial d_2}{\partial p_i} &=
  \int_{\mathcal{V}_i} \frac{\partial}{\partial p_i}
  \left| \phi(q) - \gamma.f(\|p_i-q\|) \right|^2 dq \\
  &= \frac{4 \gamma}{\sigma^2} \int_{\mathcal{V}_i}
  \left( \phi(q) - \gamma.f(\|p_i-q\|) \right)
  \left( p_i-q \right)
  e^{-\frac{\|p_i-q\|^2}{\sigma^2}} dq \\
  &= \frac{4 \gamma}{\sigma^2} \left\{ p_i \int_{\mathcal{V}_i}
  \left[ \phi(q) - \gamma.e^{-\frac{\|p_i-q\|^2}{\sigma^2}}
  \right]
  e^{-\frac{\|p_i-q\|^2}{\sigma^2}} dq \right. \\
  & \qquad - \left. \int_{\mathcal{V}_i} q
  \left[ \phi(q) - \gamma.e^{-\frac{\|p_i-q\|^2}{\sigma^2}}
  \right]
  e^{-\frac{\|p_i-q\|^2}{\sigma^2}} dq \right\} \\
  &= \frac{4 \gamma}{\sigma^2} \left\{ p_i \int_{\mathcal{V}_i}
  \lambda_i(q) dq - \int_{\mathcal{V}_i} q \lambda_i(q) dq
  \right\}
 \end{aligned}
\end{equation*}
where
\begin{equation}
 \lambda_i(q):=
  e^{-\frac{\|p_i-q\|^2}{\sigma^2}}
  \left[ \phi(q) -
  \gamma \cdot e^{-\frac{\|p_i-q\|^2}{\sigma^2}}
  \right].
 \label{eqn:lambdai}
\end{equation}
Thus we can write
\begin{equation}
 \frac{\partial d_2}{\partial p_i} =
 4 \gamma M_{\mathcal{V}_i}^{\lambda}
 (p_i - C_{\mathcal{V}_i}^{\lambda})
 \label{eqn:derivd2_gaussian}
\end{equation}
where we define
\begin{equation}
 \mathcal{C}_{\mathcal{V}_i}^{\lambda} :=
 \frac{L_{\mathcal{V}_i}^{\lambda}}{M_{\mathcal{V}_i}^{\lambda}}
 = \frac{\int_{\mathcal{V}_i} q \lambda_i(q) dq}
 {\int_{\mathcal{V}_i} \lambda_i(q) dq}
 \label{eqn:cvilambdadef}
\end{equation}
In order to make sure that
$C_{\mathcal{V}_i}^{\lambda}$ is well-defined, we need
the following:

\begin{lemma}
 For $0 < \gamma \leq \beta$
 (recall that $\beta$ is the lower bound on $\phi$),
 $\lambda_i(q) \geq 0$ for $q \in \mathcal{V}_i$ and
 $i=1,2,\dots,N$.
 \label{lemma:gammalemma}
\end{lemma}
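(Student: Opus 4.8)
The plan is to exploit the fact that $\lambda_i(q)$ factors as a manifestly positive quantity times a bracketed term, so that only the sign of the bracket needs to be controlled. First I would observe that for every $q \in \mathcal{Q}$ the Gaussian weight $e^{-\|p_i-q\|^2/\sigma^2}$ is strictly positive; hence, directly from the definition \eqref{eqn:lambdai}, $\lambda_i(q) \geq 0$ if and only if
\[
 \phi(q) \;\geq\; \gamma \, e^{-\|p_i-q\|^2/\sigma^2}.
\]

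Next I would bound the right-hand side from above using elementary estimates. Since $\|p_i-q\|^2/\sigma^2 \geq 0$ we have $e^{-\|p_i-q\|^2/\sigma^2} \leq 1$, so $\gamma \, e^{-\|p_i-q\|^2/\sigma^2} \leq \gamma$. Combining this with the standing assumption $\phi(q) \geq \beta$ and the hypothesis $\gamma \leq \beta$ yields
\[
 \gamma \, e^{-\|p_i-q\|^2/\sigma^2} \;\leq\; \gamma \;\leq\; \beta \;\leq\; \phi(q),
\]
which is precisely the required inequality. Multiplying through by the positive Gaussian weight gives $\lambda_i(q) \geq 0$, and since $q$ and $i$ were arbitrary this holds for all $q \in \mathcal{Q}$ and all $i = 1,2,\dots,N$; in particular it holds for $q \in \mathcal{V}_i$.

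I do not expect any genuine obstacle: the whole argument is a one-line pointwise estimate and it does not even use the Voronoi structure — the conclusion is actually valid on all of $\mathcal{Q}$. The only subtle point worth flagging is that the bound is uniform in $i$ and $q$, so a single admissible $\gamma$ serves all agents simultaneously; this nonnegativity is what guarantees that $L_{\mathcal{V}_i}^{\lambda}$ and $M_{\mathcal{V}_i}^{\lambda}$ are well-behaved and hence that $C_{\mathcal{V}_i}^{\lambda}$ in \eqref{eqn:cvilambdadef} is meaningful (one still tacitly assumes $M_{\mathcal{V}_i}^{\lambda} > 0$, i.e.\ that $\lambda_i$ does not vanish identically on $\mathcal{V}_i$, to rule out the degenerate case).
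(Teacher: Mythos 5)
Your proof is correct and follows essentially the same route as the paper: factor out the strictly positive Gaussian weight and note that the bracketed term satisfies $\gamma e^{-\|p_i-q\|^2/\sigma^2} \leq \gamma \leq \beta \leq \phi(q)$, so it is non-negative. You simply spell out the elementary bound $e^{-\|p_i-q\|^2/\sigma^2} \leq 1$ that the paper leaves implicit, and your remark that the result holds on all of $\mathcal{Q}$ (not just $\mathcal{V}_i$) is a harmless strengthening.
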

\begin{proof}
The second term in equation (\ref{eqn:lambdai}) an exponential
and is always positive. The first term (in the square brackets)
is non-negative if $0 < \gamma \leq \beta$. Thus proved.
\end{proof}
\begin{remark}
From the definition of
$\mathcal{C}_{\mathcal{V}_i}^{\lambda}$ and lemma
\ref{lemma:gammalemma}, we conclude that
$\mathcal{C}_{\mathcal{V}_i}^{\lambda} \in \mathcal{V}_i$ for
$0 < \gamma \leq \beta$, i.e.
$\mathcal{C}_{\mathcal{V}_i}^{\lambda}$ is contained in the
convex set $\mathcal{V}_i$.
\end{remark}
\begin{remark}
Choosing $\gamma$ as per the above lemma guarantees that
the aggregate density function of the agents $f_P(.)$ is scaled
below the lower bound of $\phi(.)$ function. This might be
detrimental to matching $f_P$ and $\phi$ if the lower bound
$\beta$ is very small and there is a large variation in
$\phi(.)$. An alternative could be to add a constant bias to the
density function. We will not however consider this case here
and leave it for future work.
\end{remark}
\begin{remark}
Although we have derived the expression for $\lambda_i$'s
assuming that the agent density functions $f_i$ are
Gaussians, the above computations carry over to any definition
of $f_i$'s which are decreasing as a function of $\|p_i-q\|^2$.
\end{remark}

\noindent Setting the gradient $\frac{\partial d_2}{\partial p_i} = 0$ gives
\begin{equation}
 \begin{aligned}
  p_i &= C_{\mathcal{V}_i}^{\lambda} =
  \frac{\int_{\mathcal{V}_i} q \lambda_i(q) dq}
  {\int_{\mathcal{V}_i} \lambda_i(q) dq}
 \end{aligned}
\end{equation}
Thus the critical points of the $\mathcal{L}^2$ optimization
problem is described by
$p_i = \mathcal{C}_{\mathcal{V}_i}^{\lambda}$.
$\lambda_i$ defines a density on $\mathcal{V}_i$ and
$C_{\mathcal{V}_i}$ is the generalized centroid of
$\mathcal{V}_i$ corresponding to the density function $\lambda_i$.
We call the critical point defined by
$p_i = \mathcal{C}_{\mathcal{V}_i}^{\lambda}; \,\, i=1,2,\dots,N$
as a generalized centroidal voronoi configuration corresponding to
the $\lambda_i$'s.


\subsection{Control laws for single integrator agents}
\label{sec:1-3-1}
We assume that the agent dynamics are given by
\begin{equation}
 \dot{p}_i = u_i, \qquad i=1,2,\dots,N.
 \label{eqn:singleintdynamicsgencov}
\end{equation}
where $p_i \in \mathcal{Q}^q$ is the position of agent $i$ and
$u_i \subset \mathbb{R}^n$ is the control input. Then we have
the following result.
\begin{theorem}
 Consider $N$ agents deployed in a convex and bounded
 region $\mathcal{Q} \subset \mathbb{R}^n$.
 Let the agent dynamics be given by
 (\ref{eqn:singleintdynamicsgencov}). Then the control law
 given by
 \begin{equation}
  u_i = -K_p (p_i - \mathcal{C}_{\mathcal{V}_i}^{\lambda})
 \end{equation}
 with $0 < \gamma \leq \beta$ and $K_p > 0$,
 drives the agents to a minimum of the cost function
 (\ref{eqn:d2}) which is the generalized centroidal voronoi
 configuration with respect to the $\lambda_i$'s.
\end{theorem}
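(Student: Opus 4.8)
The plan is to treat the cost $d_2(\phi,f_P)$ of equation~(\ref{eqn:d2}) as a Lyapunov function for the closed-loop system. First I would record the two basic facts that make this work: $d_2 \ge 0$ since it is an integral of a square, and the admissible configuration space $\mathcal{Q}^N$ is compact. I would then check that $\mathcal{Q}^N$ is positively invariant under the closed loop: by the Remark following Lemma~\ref{lemma:gammalemma} we have $\mathcal{C}_{\mathcal{V}_i}^{\lambda} \in \mathcal{V}_i \subset \mathcal{Q}$, so on $\partial\mathcal{Q}$ the velocity $-K_p(p_i - \mathcal{C}_{\mathcal{V}_i}^{\lambda})$ points from a boundary point toward an interior point of the convex set $\mathcal{Q}$ and hence never exits. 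Thus solutions exist for all $t \ge 0$ and remain bounded.

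Next I would differentiate $d_2$ along trajectories. Using the chain rule, the gradient formula~(\ref{eqn:derivd2_gaussian}) (which, by the last Remark above, is valid for any sensing function decreasing in $\|p_i-q\|^2$, the boundary terms having already been shown to cancel in the preceding lemma), and the control law $u_i = -K_p(p_i - \mathcal{C}_{\mathcal{V}_i}^{\lambda})$, I obtain
\[
 \dot d_2 \;=\; \sum_{i=1}^N \left(\frac{\partial d_2}{\partial p_i}\right)^{\!\trp}\! \dot p_i \;=\; -\,4\gamma K_p \sum_{i=1}^N M_{\mathcal{V}_i}^{\lambda}\,\bigl\|p_i - \mathcal{C}_{\mathcal{V}_i}^{\lambda}\bigr\|^2 .
\]
Because $0 < \gamma \le \beta$, Lemma~\ref{lemma:gammalemma} gives $\lambda_i \ge 0$ on $\mathcal{V}_i$, so $M_{\mathcal{V}_i}^{\lambda} = \int_{\mathcal{V}_i}\lambda_i(q)\,dq \ge 0$ and therefore $\dot d_2 \le 0$; the cost is non-increasing and bounded below.

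I would then apply LaSalle's invariance principle on the compact invariant set $\mathcal{Q}^N$. The set $\{\dot d_2 = 0\}$ consists of configurations with $M_{\mathcal{V}_i}^{\lambda}\|p_i - \mathcal{C}_{\mathcal{V}_i}^{\lambda}\|^2 = 0$ for every $i$; assuming $M_{\mathcal{V}_i}^{\lambda}>0$ (which holds whenever $\gamma<\beta$, since then $\lambda_i>0$ on a subset of $\mathcal{V}_i$ of positive measure), this forces $p_i = \mathcal{C}_{\mathcal{V}_i}^{\lambda}$ for all $i$ --- exactly the critical-point condition $\partial d_2/\partial p_i = 0$, i.e.\ a generalized centroidal Voronoi configuration. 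Each point of this set is an equilibrium of the closed loop, so the largest invariant subset is the set itself, and LaSalle yields convergence of the trajectories to it. Since $d_2$ decreases strictly away from this set and the limit configuration is a critical point reached along a descent flow, it is a (local) minimizer of $d_2$, which is the claimed conclusion.

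The step I expect to be the main obstacle is the regularity needed to make these manipulations legitimate: the map $P \mapsto \mathcal{C}_{\mathcal{V}_i}^{\lambda}$ is only piecewise smooth because the Voronoi cells change combinatorially as agents move, so one must argue that $d_2$ is nonetheless $C^1$ in $P$ (the cancellation of boundary terms in the earlier gradient lemma is precisely what supplies this) and that LaSalle's principle still applies with a merely continuous closed-loop vector field. A secondary point to dispatch is the degenerate case $M_{\mathcal{V}_i}^{\lambda}=0$, which can occur only if $\gamma=\beta$ and $\phi\equiv\beta$ on $\mathcal{V}_i$; there $\mathcal{C}_{\mathcal{V}_i}^{\lambda}$ is undefined and the statement should be read with $0<\gamma<\beta$, or with the convention that such an agent holds its position.
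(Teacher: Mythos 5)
Your proposal follows essentially the same route as the paper's own proof: take $V = d_2(\phi,f_P)$ as the Lyapunov function, compute $\dot V$ via the gradient formula \eqref{eqn:derivd2_gaussian} and the control law to get a nonpositive sum of $M_{\mathcal{V}_i}^{\lambda}\|p_i-\mathcal{C}_{\mathcal{V}_i}^{\lambda}\|^2$ terms, and invoke LaSalle on the compact, positively invariant set to conclude convergence to the generalized centroidal Voronoi configuration. Your additional care about positive invariance, the degenerate case $M_{\mathcal{V}_i}^{\lambda}=0$, and the piecewise-smoothness of the Voronoi-dependent terms only fills in details the paper leaves implicit, so the argument is correct and matches the paper's.
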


\begin{proof}
Consider the function
\begin{equation*}
 V(t) = d_2(\phi,f_P)
\end{equation*}
Taking the derivative,
\begin{equation*}
 \centering
 \begin{aligned}
  \dot{V} &= \sum_{i=1}^N
  \frac{\partial d_2}{\partial p_i}^{\trp}
  \dot{p}_i
  = \sum_{i=1}^N
  \frac{\partial d_2}{\partial p_i}^{\trp} u_i \\
  &= \sum_{i=1}^N
  \frac{4 \gamma}{\sigma^2} M_{\mathcal{V}_i}^{\lambda} (p_i -
  \mathcal{C}_{\mathcal{V}_i}^{\lambda})^{\trp} u_i
 \end{aligned}
\end{equation*}
Substituting the control law, we get
\begin{equation*}
 \begin{aligned}
  \dot{V} &= - \sum_{i=1}^N \frac{4 \gamma}{\sigma^2} K_p
  M_{\mathcal{V}_i}^{\lambda} \| p_i-
  \mathcal{C}_{\mathcal{V}_i}^{\lambda} \|^2
 \end{aligned}
\end{equation*}
$V$ is continuously differentiable on the compact set
$\mathcal{Q}$, and $\mathcal{Q}$ is positively invariant with
respect to the closed loop dynamics.
Since $\dot{V} \leq 0$, from LaSalle invariance principle
[\citet{khalil2002nonlinear}], we conclude
that the trajectories converge to the largest invariant set in
$\{ p_i :
\| p_i - \mathcal{C}_{\mathcal{V}_i}^{\lambda} \| = 0 \}$
which is the set itself. Since the cost function $V$ is
decreasing with time, we see that it converges to a minimum.
This concludes the proof.
\end{proof}

\begin{remark}
The above control law is similar to the control law derived
for the locational optimization case discussed previously
[\citet{cortes2004coverage}] except that in defining the
generalized centroid, $\phi(q)$ is replaced by the modified
function $\lambda_i(q) \,\, \mbox{for} \,\, q \in \mathcal{V}_i$.
\end{remark}

\subsection{Density function}
In the sequel, we assume that the density function
is unknown as will be the case in most practical scenarios.
We will also assume that it can be linearly parameterized
as [see \citet{Schwager2009}]
\begin{equation}
 \phi(q) = \mathcal{K}(q)^{\trp} a
\end{equation}
where
$\mathcal{K}(q)^{\trp} =
\left[ \mathcal{K}^1(q), \mathcal{K}^2(q), \dots,
\mathcal{K}^p(q) \right]$ is a vector of known basis functions
evaluated at $q \in \mathcal{Q}$ and $a \in \mathbb{R}^p$ is an
unknown constant parameter vector. Here
$\mathcal{K}^i : \mathcal{Q} \to \mathbb{R}_+$ and $a > 0$.
Each agent estimates the parameter $a$ to form the estimate for
$\phi(.)$. Each agent is also capable of measuring the value of
$\phi(.)$ at its current location. We also assume that the
parameters are lower bounded, i.e. $a^i > a_{\mbox{\small min}}$
where $a^i$ is the $i$-th component of $a$.

\section{Control laws for Differential Drive Robots}
\label{sec:control}
In this section, we introduce the model for the differential drive
robots, and derive the adaptive control laws.

\subsection{Robot Model}
\label{sec:control-robotmodel}
The robot model is given by (see figure (\ref{fig:robotmodel}))
\begin{figure}
 \centering
 \subfloat{
 \includegraphics
 [scale=0.25]
 {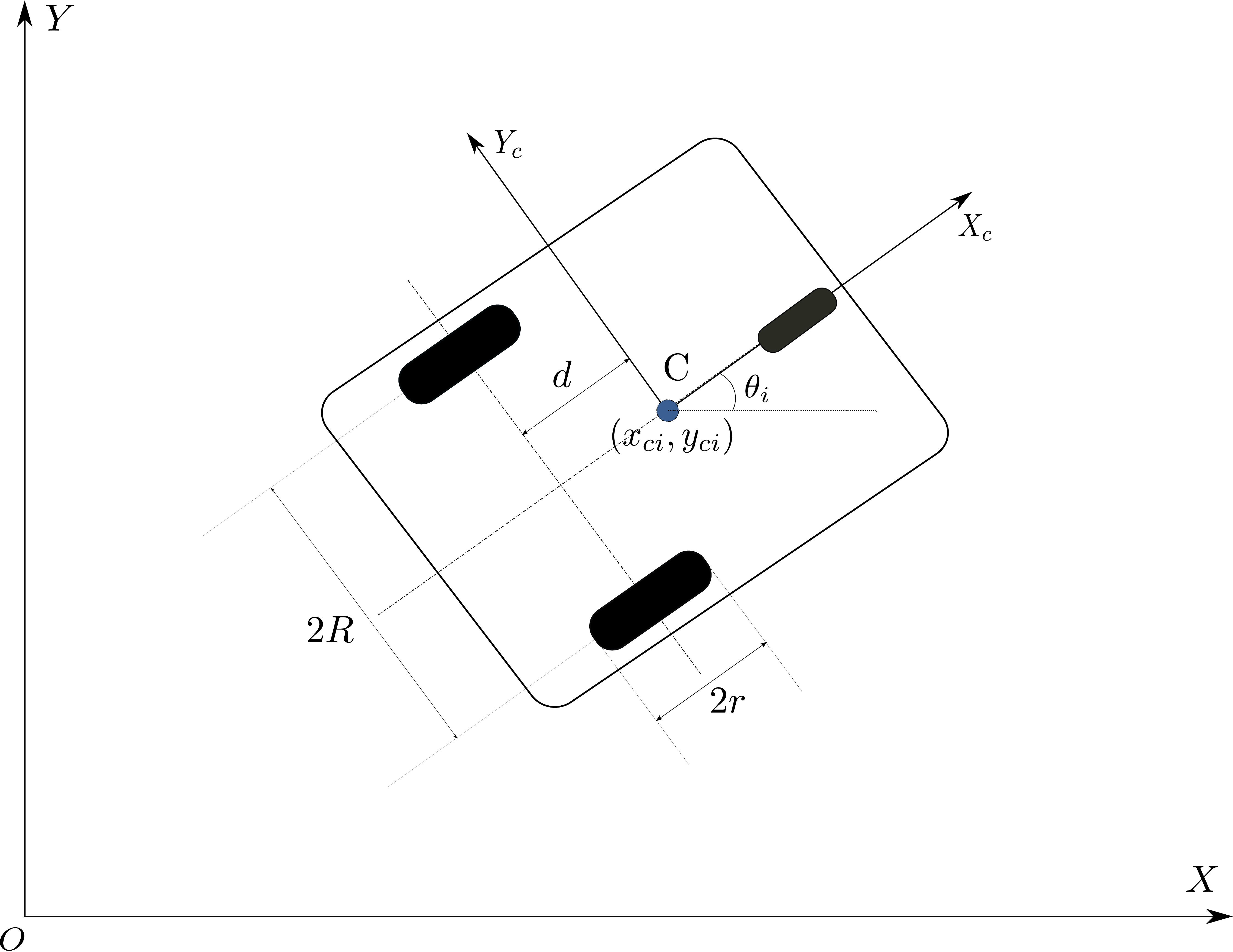}}
 \caption{Robot Model}
 \label{fig:robotmodel}
\end{figure}
\begin{equation}
 \left[ \begin{array}{c}
 \dot{x}_i \\
 \dot{y}_i \\
 \dot{\theta}_i
 \end{array} \right] =
 \left[ \begin{array}{cc}
 \mbox{cos}(\theta_i) & -d . \mbox{sin}(\theta_i) \\
 \mbox{sin}(\theta_i) & d . \mbox{cos}(\theta_i) \\
 0 & 1
 \end{array} \right]
 \left[ \begin{array}{c}
 u_i \\
 \omega_i
 \end{array} \right]
 \label{eqn:robotmodel}
\end{equation}
Here, $p_i = (x_i, y_i)^{\trp}$ is the position of the centre
point of robot $i$ (this corresponds to the point which is
tracked by the localization system). $\theta_i$ is the
orientation of robot $i$. The control input is given by
$v_i = (u_i,\omega_i)^{\trp}$ where $u_i$ and $\omega_i$
are the linear and angular velocity commands. $d$ is the distance
of the centre point from the wheel axis.

\subsection{Adaptive Control Law}
\label{sec:control-law}
The control law derivation is similar to that in
\citet{Schwager2009,rihab2018a}. We denote the estimate of
parameter $a$ of agent $i$ at time $t$ by $\hat{a}_i(t)$ and
the measurement of $\phi(.)$ by the agent $i$ by
$\phi_i(t) = \phi(p_i(t))$ where $p_i(t)$ is the position of
agent $i$ at time $t$. Now we define the following filters:
\begin{eqnarray}
 \dot{\Lambda}_i(t) &= -\alpha \Lambda_i(t) + \mathcal{K}_i(t)
 \mathcal{K}_i(t)^{\trp}\\
 \dot{\lambda}_i(t) &=
 -\alpha \lambda_i(t) + \mathcal{K}_i(t) \phi_i(t)
 \label{eqn:filtereqns}
\end{eqnarray}
where $\mathcal{K}_i(t) =
\mathcal{K}(p_i(t))$, $\Lambda_i(0) = {0}$, $\lambda_i(0) = {0}$.
\par \noindent The control and adaptation laws are then given by
\begin{align}
 \label{eqn:robotcontrol_l2}
 v_i &= -k_1 \left[S_i(q)\right]^{-1}
 (p_i-\hat{C}_{\mathcal{V}_i}^{\lambda}) \\
 \label{eqn:adaptation1_l2}
 \dot{\hat{a}}_i &= \Gamma(b_i - I_{\beta_i}b_i)
\end{align}
\begin{equation}
 \begin{aligned}
  b_i &= -k_2 \int_{\mathcal{V}_i} e^{-\frac{\|p_i-q\|^2}{\sigma^2}}
  \mathcal{K}(q) (q-p_i)^T dq \,\, S_i(q) v_i \\
  & \qquad - \gamma(\Lambda_i \hat{a}_i - \lambda_i)
 \end{aligned}
 \label{eqn:adaptation2_l2}
\end{equation}
where $k_1, k_2 > 0$ are positive gains, $S(q)$ is the matrix
\begin{equation}
 \centering
 S(q) =
 \left[ \begin{array}{cc}
 \mbox{cos}(\theta_i) & -d . \mbox{sin}(\theta_i) \\
 \mbox{sin}(\theta_i) & d . \mbox{cos}(\theta_i)
 \end{array}  \right]
\end{equation}
We can now state the following theorem:
\begin{theorem}
 \label{thm:l2differentialdrive}
 Consider $N$ differential drive robots deployed in the region
 $\mathcal{Q}$ for covering the region. Assume that the robots
 implement the control law \eqref{eqn:robotcontrol_l2} and the
 adaptation law \eqref{eqn:adaptation1_l2} and
 \eqref{eqn:adaptation2_l2}. Then the following statements hold:
 \begin{enumerate}
  \item $\lim_{t \to \infty}
  \|p_i - \hat{C}_{\mathcal{V}_i}^{\lambda}\| = 0$,
  \item $\lim_{t \to \infty} \|v_i\| = 0$,
  \item $\lim_{t \to \infty}
  \mathcal{K}_i(\tau) \tilde{a}_i(t) = 0$
  for $ \forall \tau \,\, \mbox{s.t} \,\,\, t-T<\tau<t$,
  where $T>0$.
 \end{enumerate}
 for all $i=1,2,\dots,N$.
\end{theorem}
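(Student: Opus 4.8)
The plan is to establish all three statements with a single Lyapunov/Barbalat argument built on the \emph{true} cost $d_2(\phi,f_P)$ rather than any estimated cost, so that the Voronoi boundary integrals cancel exactly as in the Lemma behind \eqref{eqn:derivd2}, with the first term of $b_i$ in \eqref{eqn:adaptation2_l2} designed to absorb the parameter‑error term that the estimate $\hat a_i$ injects into $\dot d_2$. I would use
\[ V \;=\; \tfrac{k_2}{\kappa}\,d_2(\phi,f_P)\;+\;\tfrac12\sum_{i=1}^N \tilde a_i^{\trp}\Gamma^{-1}\tilde a_i, \qquad \tilde a_i:=\hat a_i-a , \]
with $\kappa>0$ the constant in \eqref{eqn:derivd2_gaussian}.

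Three preliminary facts are needed. (i) Since $\det S_i = d\neq 0$, the control law \eqref{eqn:robotcontrol_l2} gives $\dot p_i = S_i v_i = -k_1(p_i-\hat C_{\mathcal V_i}^{\lambda})$, and the analogue of Lemma~\ref{lemma:gammalemma} applied to $\hat\phi_i(q)=\mathcal K(q)^{\trp}\hat a_i$ (whose components the projection keeps $\ge a_{\min}$) makes $\hat\lambda_i\ge0$ and $\hat C_{\mathcal V_i}^{\lambda}\in\mathcal V_i\subset\mathcal Q$, so $\mathcal Q$ is positively invariant. (ii) From \eqref{eqn:filtereqns} and $\phi_i=\mathcal K_i^{\trp}a$ one checks that $\lambda_i(t)=\Lambda_i(t)a$ (both solve the same linear ODE with zero data), hence $\Lambda_i\hat a_i-\lambda_i=\Lambda_i\tilde a_i$ with $\Lambda_i(t)=\int_0^t e^{-\alpha(t-s)}\mathcal K_i(s)\mathcal K_i(s)^{\trp}ds\succeq0$ bounded. (iii) With $G_i:=\int_{\mathcal V_i}e^{-\|p_i-q\|^2/\sigma^2}(p_i-q)\mathcal K(q)^{\trp}dq$, the linear parameterization yields $M_{\mathcal V_i}^{\lambda}p_i-L_{\mathcal V_i}^{\lambda}=\hat M_{\mathcal V_i}^{\lambda}(p_i-\hat C_{\mathcal V_i}^{\lambda})-G_i\tilde a_i$. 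Writing $e_i:=p_i-\hat C_{\mathcal V_i}^{\lambda}$, differentiating $d_2$ through \eqref{eqn:derivd2_gaussian} and fact (iii) gives $\dot d_2=-k_1\kappa\sum_i\big(\hat M_{\mathcal V_i}^{\lambda}\|e_i\|^2-(G_i\tilde a_i)^{\trp}e_i\big)$; and since $S_iv_i=\dot p_i=-k_1e_i$, the first term of $b_i$ equals $-k_1k_2G_i^{\trp}e_i$, so $b_i=-k_1k_2G_i^{\trp}e_i-\gamma\Lambda_i\tilde a_i$ and, using the standard projection property $\tilde a_i^{\trp}(I_{\beta_i}b_i)\ge0$ (valid because the true $a$ is interior, $a^j>a_{\min}$), $\sum_i\tilde a_i^{\trp}\Gamma^{-1}\dot{\hat a}_i\le\sum_i\tilde a_i^{\trp}b_i=-k_1k_2\sum_i(G_i\tilde a_i)^{\trp}e_i-\gamma\sum_i\tilde a_i^{\trp}\Lambda_i\tilde a_i$. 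Adding, the $(G_i\tilde a_i)^{\trp}e_i$ terms cancel and
\[ \dot V \;\le\; -k_1k_2\sum_{i=1}^N\hat M_{\mathcal V_i}^{\lambda}\|e_i\|^2 \;-\; \gamma\sum_{i=1}^N\tilde a_i^{\trp}\Lambda_i\tilde a_i \;\le\; 0 . \]

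From $V\ge0$ and $\dot V\le0$: $V$ converges and $d_2,\tilde a_i,\hat a_i$ are bounded; $p_i\in\mathcal Q$ and $\mathcal K_i,\Lambda_i,\lambda_i$ are bounded (stable filters over a compact domain); assuming the agents remain at distinct positions, the analogue of Lemma~\ref{lemma:gammalemma} and a sufficiently small $\gamma$ keep $\hat M_{\mathcal V_i}^{\lambda}\ge m_0>0$, so $\hat C_{\mathcal V_i}^{\lambda}$, $v_i=-k_1S_i^{-1}e_i$ and $b_i$ are bounded, hence $\dot p_i,\dot{\hat a}_i,\dot\Lambda_i$ and $\tfrac{d}{dt}\hat C_{\mathcal V_i}^{\lambda}$ are bounded and the two nonnegative terms in $-\dot V$ are uniformly continuous. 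Their time‑integrals being finite, Barbalat's lemma gives $\hat M_{\mathcal V_i}^{\lambda}\|e_i\|^2\to0$ and $\tilde a_i^{\trp}\Lambda_i\tilde a_i\to0$. Combined with $\hat M_{\mathcal V_i}^{\lambda}\ge m_0$ the first is statement~1, and then $\|v_i\|=k_1\|S_i^{-1}e_i\|\le k_1\|S_i^{-1}\|\,\|e_i\|\to0$ is statement~2. For statement~3, write $\tilde a_i^{\trp}\Lambda_i\tilde a_i=\int_0^t e^{-\alpha(t-s)}\big(\mathcal K_i(s)^{\trp}\tilde a_i(t)\big)^2ds$; restricting to $s\in(t-T,t)$ where $e^{-\alpha(t-s)}\ge e^{-\alpha T}$ shows $\int_{t-T}^{t}\big(\mathcal K_i(s)^{\trp}\tilde a_i(t)\big)^2ds\to0$, and since $s\mapsto\mathcal K_i(s)^{\trp}\tilde a_i(t)$ is uniformly continuous uniformly in $t$ (bounded $\tilde a_i$, Lipschitz basis functions, bounded $\dot p_i$), a nonnegative uniformly continuous function whose integral over a fixed‑length window tends to zero must itself tend to zero uniformly on that window, i.e.\ $\sup_{t-T<\tau<t}|\mathcal K_i(\tau)^{\trp}\tilde a_i(t)|\to0$.

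The crux, and the step I expect to be most delicate, is the exact cancellation of the parameter‑dependent cross term: it rests on the exact linear parameterization of $\phi$, on using $d_2$ (not an estimated cost) so the Voronoi boundary integrals vanish, and on the precise coefficient and sign of the first term of $b_i$ in \eqref{eqn:adaptation2_l2}. The second delicate point is the regularity needed to invoke Barbalat — above all bounding $\tfrac{d}{dt}\hat C_{\mathcal V_i}^{\lambda}$, which requires the Voronoi cells to vary smoothly (distinct agent positions) and $\hat M_{\mathcal V_i}^{\lambda}$ to stay bounded away from zero; this is the role of the restriction $0<\gamma\le\beta$ (or slightly smaller, so that $\hat\lambda_i$ is strictly positive near each $p_i$).
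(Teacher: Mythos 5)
Your proof is correct and follows essentially the same route as the paper: the Lyapunov function $d_2(\phi,f_P)$ plus the parameter-error quadratic (your $k_2/\kappa$ rescaling merely makes explicit the gain matching the paper leaves implicit), the split of $\lambda_i$ into $\hat\lambda_i$ minus the $\tilde a_i$-dependent term, cancellation of the resulting cross term by the first term of $b_i$, non-positivity of the projection and filter terms, and Barbalat's lemma. If anything you are more careful than the paper, which cites \citet{Schwager2009} for the sign arguments, drops the $\hat M_{\mathcal{V}_i}^{\lambda}$ weight in the first term of $\dot V$, and does not spell out the boundedness/uniform-continuity details or the windowed argument for statement 3 that you supply.
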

\begin{proof}
Consider the lyapunov function
\[
 V = d_2(\phi,f_P) +
 \frac{1}{2} \sum_{i=1}^N \tilde{a}_i^{\trp} \Gamma^{-1} \tilde{a}_i
\]
Taking the derivative,
\begin{align*}
 \dot{V} &= \sum_{i=1}^N \frac{\partial d_2}{\partial p_i}^{\trp}
 \dot{p}_i +
 \sum_{i=1}^N \tilde{a}_i^{\trp} \Gamma^{-1} \dot{\hat{a}}_i
\end{align*}
Using \eqref{eqn:derivd2_gaussian}, we get
\begin{align*}
 \dot{V} &= \sum_{i=1}^N \frac{4 \gamma}{\sigma^2}
 M_{\mathcal{V}_i}^{\lambda}
 (p_i - C_{\mathcal{V}_i}^{\lambda})^{\trp} \dot{p}_i +
 \sum_{i=1}^N \tilde{a}_i^{\trp} \Gamma^{-1} \dot{\hat{a}}_i
\end{align*}
Using the definitions of $M_{\mathcal{V}_i}^{\lambda}$ and
$C_{\mathcal{V}_i}^{\lambda}$,
\begin{align}
 \dot{V} &= \frac{4 \gamma}{\sigma^2}
 \sum_{i=1}^N \int_{\mathcal{V}_i} (p_i - q)^{\trp}
 \lambda_i(q) dq \, \dot{p}_i +
 \sum_{i=1}^N \tilde{a}_i^{\trp} \Gamma^{-1} \dot{\hat{a}}_i
 \label{eqn:proof1}
\end{align}
Now $\lambda_i(q)$ can be written as
\begin{align*}
 \lambda_i(q) &= e^{-\frac{\|p_i-q\|^2}{\sigma^2}} \left[ \phi(q) -
 \gamma \cdot e^{-\frac{\|p_i-q\|^2}{\sigma^2}} \right] \\
 &= e^{-\frac{\|p_i-q\|^2}{\sigma^2}} \left[ \mathcal{K}(q)^{\trp}
 (\hat{a}_i - \tilde{a}_i) -
 \gamma \cdot e^{-\frac{\|p_i-q\|^2}{\sigma^2}} \right] \\
 &= e^{-\frac{\|p_i-q\|^2}{\sigma^2}} \left[ \hat{\phi}(q) -
 \gamma \cdot e^{-\frac{\|p_i-q\|^2}{\sigma^2}} \right] \\
 & \quad - \tilde{a}_i^{\trp} \mathcal{K}(q)
 e^{-\frac{\|p_i-q\|^2}{\sigma^2}} \\
 &= \hat{\lambda}_i(q) - \tilde{a}_i^{\trp} \mathcal{K}(q)
 e^{-\frac{\|p_i-q\|^2}{\sigma^2}}.
\end{align*}
where $\hat{\lambda}_i(q) :=
e^{-\frac{\|p_i-q\|^2}{\sigma^2}} \left[ \hat{\phi}(q) -
\gamma \cdot e^{-\frac{\|p_i-q\|^2}{\sigma^2}} \right] $. \\
\noindent Using this in equation \eqref{eqn:proof1}, we get
\begin{align*}
 \dot{V} &= \frac{4 \gamma}{\sigma^2} \sum_{i=1}^N
 \int_{\mathcal{V}_i} (p_i - q)^{\trp}
 \hat{\lambda}_i(q) dq \, \dot{p}_i +
 \sum_{i=1}^N \tilde{a}_i^{\trp} \Gamma^{-1} \dot{\hat{a}}_i \\
 & \quad + \frac{4 \gamma}{\sigma^2} \sum_{i=1}^N
 \tilde{a}_i^{\trp} \int_{\mathcal{V}_i} \mathcal{K}(q)
 e^{-\frac{\|p_i-q\|^2}{\sigma^2}} (p_i - q)^{\trp} dq
 \, \dot{p}_i
\end{align*}
Using \eqref{eqn:robotmodel}, the control law
\eqref{eqn:robotcontrol_l2}, adaptation law
\eqref{eqn:adaptation1_l2} and simplifying, we get
\begin{align*}
 \dot{V} &= -\frac{4 \gamma}{\sigma^2} k_1 \sum_{i=1}^N
 \|p_i - \hat{C}_{\mathcal{V}_i}^{\lambda}\|^2
 - \sum_{i=1}^N \tilde{a}_i^{\trp} I_{\beta_i} b_i \\
 & \quad - \tilde{a}_i^{\trp} \int_0^t
 e^{-\alpha (t-\tau)} \mathcal{K}_i(\tau)
 \mathcal{K}_i(\tau)^{\trp} d \tau \, \tilde{a}_i.
\end{align*}
It can be shown that all three terms of $\dot{V}$ above are
non-positive [see \citet{Schwager2009}]. Since $V$ is bounded
below by zero and its time derivative is non-positive, it
follows that $\lim_{t \to \infty} V(t)$ is finite. This implies
that $\dot{V}$ is integrable. Using Barbalat's lemma, we can
conclude that $\lim_{t \to \infty} \dot{V} = 0$. Statements 1
and 3 of the theorem follow immediately. Statement 2 follows
from statement 1 and equation \eqref{eqn:robotcontrol_l2}.
The proof is thus complete.
\end{proof}

\begin{remark}
From the statements of theorem \ref{thm:l2differentialdrive}, we
also observe that $\lim_{t \to \infty} \dot{\theta}_i = 0$ for
all $i=1,2,\dots,N$.
\end{remark}

\begin{remark}
Since we intend to compare the performance of the $\mathcal{L}^2$
framework with the locational optimization problem, we mention
below the control and adaptation laws for the locational
optimization case [see \citet{rihab2018a} for more details on
the same] which is derived similar to theorem
\ref{thm:l2differentialdrive}.
\begin{align}
 v_i &=
 -k_1 \left[S_i(q)\right]^{-1} (p_i-\hat{C}_{\mathcal{V}_i}) \\
 \dot{\hat{a}}_i &= \Gamma(b_i - I_{\beta_i}b_i)
 \label{eqn:controllaw_locopt}
\end{align}
\begin{equation}
 \begin{aligned}
  b_i &= -k_2 \int_{\mathcal{V}_i}
  \mathcal{K}(q) (q-p_i)^T dq \,\, S_i(q) v_i \\
  &\qquad - \gamma(\Lambda_i \hat{a}_i - \lambda_i)
 \end{aligned}
 \label{eqn:adaptationlaw_locopt}
\end{equation}
\end{remark}

\subsection{Consensus for improving Parameter Convergence}
\label{sec:control-consensus}
From the proof of theorem \ref{thm:l2differentialdrive}, it can
be observed that the parameter estimate of $\hat{a}_i$ converges
to the true value $a$ if the matrix
\[
 \lim_{t \to \infty} \int_0^t e^{-\alpha (t - \tau)}
 \mathcal{K}_i(\tau) \mathcal{K}_i(\tau)^{\trp} d\tau
\]
is positive definite.
In \citet{Schwager2009}, a consensus term was proposed to be
included in the adaptation law to improve parameter convergence.
It was shown that using a consensus term in the adaptation law
makes the parameter estimates of the agents converge to a common
value and thus also weakens the sufficient richness condition
required for parameter convergence. The modified adaptation law
is given by:
\begin{align}
 \dot{\hat{a}}_i &= \Gamma \left(b_i - I_{\beta}b_i \right) \\
 b_i &= -F_i(p_i) - \gamma (\Lambda_i \hat{a}_i - \lambda_i) -
 \zeta \sum_{j=1}^N l_{ij} (\hat{a}_i - \hat{a}_j)
\end{align}
where $F_i(p_i)$ is the integral term in the adaptation law.
The underlying graph used for consensus is the delaunay graph
where the agents which share an edge of voronoi partition have
the corresponding coefficients $l_{ij}$ non-zero.
In \cite{Schwager2009} the authors propose that $l_{ij}$ be equal
to the length of the shared voronoi edge
($|\mathcal{V}_i \cap \mathcal{V}_j|$) between agent $i$ and $j$.
An important consequence of using consensus based adaptation law
is [see \cite{Schwager2009}].
\begin{corollary}[Corollary 2, \cite{Schwager2009}]
Using the consensus adaptation law, in addition to the
convergence of position and velocity, if the agent paths are
such that
\[ \sum_{i=1}^N \int\limits_0^t e^{-\alpha (t-\tau)}
\mathcal{K}_i(\tau) \mathcal{K}_i(\tau)^{\trp} d \tau \]
is positive definite, each agent's parameter estimate converges
to the true value of the parameter.
\end{corollary}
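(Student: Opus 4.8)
The plan is to transcribe the Lyapunov argument of Theorem~\ref{thm:l2differentialdrive}, carrying the extra consensus term through the computation. I would take the same augmented Lyapunov function
\[
 V = d_2(\phi,f_P) + \tfrac{1}{2}\sum_{i=1}^N \tilde{a}_i^{\trp}\Gamma^{-1}\tilde{a}_i ,
\]
differentiate along the closed loop now driven by the consensus adaptation law, and repeat the manipulations of the theorem's proof. Using the filter solutions $\Lambda_i(t)=\int_0^t e^{-\alpha(t-\tau)}\mathcal{K}_i(\tau)\mathcal{K}_i(\tau)^{\trp}d\tau$ and $\lambda_i(t)=\int_0^t e^{-\alpha(t-\tau)}\mathcal{K}_i(\tau)\phi_i(\tau)d\tau$, together with $\phi_i=\mathcal{K}_i^{\trp}a$ and the constancy of $a$, one gets $\Lambda_i\hat{a}_i-\lambda_i=Q_i(t)\tilde{a}_i$ with $Q_i(t):=\Lambda_i(t)\succeq 0$, exactly as in the theorem. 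The only new contribution to $\dot V$ is $-\zeta\sum_i\tilde{a}_i^{\trp}\sum_j l_{ij}(\hat{a}_i-\hat{a}_j)$; writing $\hat{a}_i-\hat{a}_j=\tilde{a}_i-\tilde{a}_j$ and using symmetry of the Delaunay weights $l_{ij}=|\mathcal{V}_i\cap\mathcal{V}_j|$, this equals $-\tfrac{\zeta}{2}\sum_{i,j}l_{ij}\|\tilde{a}_i-\tilde{a}_j\|^2\le 0$. Hence
\[
 \dot V = -\tfrac{4\gamma k_1}{\sigma^2}\sum_i\|p_i-\hat{C}_{\mathcal{V}_i}^{\lambda}\|^2 \;-\; \sum_i\tilde{a}_i^{\trp}I_{\beta_i}b_i \;-\; \sum_i\tilde{a}_i^{\trp}Q_i(t)\tilde{a}_i \;-\; \tfrac{\zeta}{2}\sum_{i,j}l_{ij}\|\tilde{a}_i-\tilde{a}_j\|^2 \;\le\; 0 ,
\]
each term being non-positive (the projection term as in \cite{Schwager2009}). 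As in the theorem, $V\ge 0$ and $\dot V\le 0$ force $\lim_{t\to\infty}V$ to exist, so $\dot V$ is integrable; since $\mathcal{K}$ and $\phi$ are bounded on the compact $\mathcal{Q}$ and $p_i,\hat{a}_i,\Lambda_i,\lambda_i$ all stay bounded, $\dot V$ is uniformly continuous and Barbalat's lemma yields $\dot V\to 0$.

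I would then read off the four non-positive terms separately. The first gives $\|p_i-\hat{C}_{\mathcal{V}_i}^{\lambda}\|\to 0$ and hence $\|v_i\|\to 0$ — the convergence of position and velocity already assumed in the statement. From the last term, $\sum_{i,j}l_{ij}\|\tilde{a}_i-\tilde{a}_j\|^2\to 0$; since the Delaunay graph underlying $\{l_{ij}\}$ is connected, this forces $\tilde{a}_i(t)-\tilde{a}_j(t)\to 0$ for every pair $i,j$, i.e. the parameter estimates reach consensus asymptotically. From the third term, $\sum_{i=1}^N \tilde{a}_i(t)^{\trp}Q_i(t)\,\tilde{a}_i(t)\to 0$.

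To conclude $\tilde{a}_i\to 0$ I would combine the last two facts. Set $\bar{\tilde a}(t)=\tfrac1N\sum_i\tilde{a}_i(t)$ and $\epsilon_i(t)=\tilde{a}_i(t)-\bar{\tilde a}(t)$, so $\epsilon_i\to 0$ while $\bar{\tilde a}$ stays bounded (the $\Gamma^{-1}$-quadratic part of $V$ is bounded). Each $Q_i(t)$ satisfies $\|Q_i(t)\|\le \tfrac1\alpha\sup_{q\in\mathcal Q}\|\mathcal{K}(q)\|^2$, so expanding $\sum_i(\bar{\tilde a}+\epsilon_i)^{\trp}Q_i(\bar{\tilde a}+\epsilon_i)$ the cross terms and the $\epsilon_i^{\trp}Q_i\epsilon_i$ terms vanish as $t\to\infty$, leaving $\bar{\tilde a}(t)^{\trp}\big(\sum_iQ_i(t)\big)\bar{\tilde a}(t)\to 0$. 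Because $p_i(t)$ converges, $\mathcal{K}_i(t)$ converges and so does $\sum_iQ_i(t)$; its limit is positive definite by the hypothesis, so $\sum_iQ_i(t)\succeq\delta I$ for some $\delta>0$ and all $t$ beyond some $T_0$. Hence $\delta\|\bar{\tilde a}(t)\|^2\le\bar{\tilde a}(t)^{\trp}\big(\sum_iQ_i(t)\big)\bar{\tilde a}(t)\to 0$, so $\|\bar{\tilde a}(t)\|\to 0$ and $\tilde{a}_i(t)=\bar{\tilde a}(t)+\epsilon_i(t)\to 0$ for every $i$, i.e. $\hat{a}_i\to a$.

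I expect the main obstacle to be exactly this last step: passing from the single scalar limit $\sum_i\tilde{a}_i^{\trp}Q_i\tilde{a}_i\to 0$ plus asymptotic consensus to the vector conclusion $\tilde{a}_i\to 0$. The two delicate points are (i) upgrading the hypothesis ``$\sum_iQ_i(t)$ positive definite'' to a uniform lower bound $\delta I$ for large $t$ — here one leans on the fact that the closed loop already forces $p_i(t)$, and hence $\sum_iQ_i(t)$, to converge, so the limiting excitation matrix is positive definite and bounded away from zero — and (ii) controlling the cross terms in the quadratic-form expansion, which needs the uniform boundedness of the $Q_i$ and of the filter states. Both are routine but must be stated explicitly; everything else is a direct transcription of the proofs of Theorem~\ref{thm:l2differentialdrive} and of Corollary~2 in \cite{Schwager2009}.
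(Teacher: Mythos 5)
The paper never proves this corollary: it is imported verbatim (``Corollary 2'') from \cite{Schwager2009}, so there is no in-paper proof to compare against. Your reconstruction follows exactly the route of the cited source and of the paper's own later directed-consensus theorem: same augmented Lyapunov function as in Theorem~\ref{thm:l2differentialdrive}, the identity $\Lambda_i\hat{a}_i-\lambda_i=\Lambda_i\tilde{a}_i$ (correct, since $\lambda_i=\Lambda_i a$), non-positivity of the consensus contribution via the symmetric weighted Laplacian quadratic form $-\tfrac{\zeta}{2}\sum_{i,j}l_{ij}\|\tilde{a}_i-\tilde{a}_j\|^2$, Barbalat, and then reading off the vanishing terms and combining $\sum_i\tilde{a}_i^{\trp}\Lambda_i\tilde{a}_i\to 0$ with asymptotic consensus to kill the common error. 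Structurally this is the intended argument, and your bookkeeping of the cross terms (boundedness of $\Lambda_i$ and of $\tilde a_i$ from $V$) is fine.

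Two steps, however, do not close as written. First, your uniform bound $\sum_i\Lambda_i(t)\succeq\delta I$ for large $t$ is justified by ``$p_i(t)$ converges, hence $\mathcal{K}_i(t)$ and $\sum_i\Lambda_i(t)$ converge''; but Theorem~\ref{thm:l2differentialdrive} (and the phrase ``convergence of position and velocity'' in the corollary) only gives $\|p_i-\hat{C}_{\mathcal{V}_i}^{\lambda}\|\to 0$ and $\|v_i\|\to 0$ --- convergence of $p_i(t)$ to a fixed point is not established, since $\hat{C}_{\mathcal{V}_i}^{\lambda}$ itself moves with the evolving estimates and Voronoi cells. The honest repair is to read the hypothesis in the uniform form $\liminf_{t\to\infty}\lambda_{\min}\bigl(\sum_i\Lambda_i(t)\bigr)>0$ (equivalently, positive definiteness of the limit when it exists), which is how the condition is actually used in \cite{Schwager2009}; pointwise positive definiteness at each $t$ alone would not suffice, because the smallest eigenvalue could decay to zero. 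Second, deducing $\tilde{a}_i-\tilde{a}_j\to 0$ for all pairs from $\sum_{i,j}l_{ij}(t)\|\tilde{a}_i-\tilde{a}_j\|^2\to 0$ needs more than connectivity of the Delaunay graph: the weights $l_{ij}(t)=|\mathcal{V}_i\cap\mathcal{V}_j|$ are time-varying and a shared edge can degenerate to a point, so you must additionally assume (or argue) that the weights on some connected spanning subgraph stay bounded away from zero asymptotically --- or use constant positive weights, as the paper does in its directed variant. Both issues are inherited from the source and you flagged the first yourself, but as stated these are the places where the proof has genuine gaps.
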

The above condition is weaker since with consensus, the positive
definiteness condition is over sum of trajectories of all agents
as opposed to the individual trajectories for each of the
agents.

\subsubsection{Directed Consensus}
It can be seen that the agents whose trajectories follow a
certain path estimates certain parameters to large accuracy
whereas for other parameters they have poor estimates. This
can be observed from the adaptation term
$-(\Lambda_i \hat{a}_i - \lambda_i)$ which means that the error
between the measured and the estimated value of $\phi(.)$ is
weighted by the corresponding regressor element
$\mathcal{K}_i^{(j)}$ for updating the corresponding parameter
estimate $\hat{a}_i^{(j)}$. Thus if the agent trajectory is such
that the regressor element always takes a low value, then the
corresponding parameter estimate is also very poor. This means
that using a consensus term can sometimes reduce the accuracy
and/or convergence speed of parameter estimates of those agents
which are otherwise able to accurately estimate the parameter.
\par Based on the above observation, we propose a modified
consensus law. Corresponding to each parameter $a^{(j)}$, we
construct a directed sub-graph $\mathcal{G}^{(j)}(t)$ of the
delaunay graph $\mathcal{G}(t)$ as follows: a directed edge
between voronoi neighbours $i$ and $l$ exists if
\begin{equation}
	\mathcal{K}_i^{(j)}(t) \geq \mathcal{K}_l^{(j)}(t).
	\label{eqn:Ki_condition_consensus}
\end{equation}
The weights for the directed edges are taken as constant. This
protocol creates a seperate directed sub-graph of the
undirected delaunay graph corresponding to each parameter at
each time $t$. An illustration is shown in figure
\ref{fig:consensus_dirgraph}.

\begin{figure}
 \centering
 \includegraphics
 [scale=4.0]
 {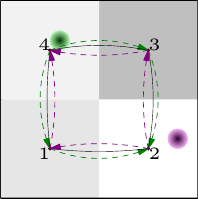}
 \caption{Illustration of directed graphs for consensus: $4$
 agents with $2$ parameters - the peak of the basis functions
 are shown in color with the directed graph for the corresponding
 parameter shown in the same color; the black color shows the
 original delaunay graph.}
 \label{fig:consensus_dirgraph}
\end{figure}

\begin{lemma}
If the delaunay graph $\mathcal{G}(t)$ is connected and the
basis functions in $\mathcal{K}(\cdot)$ are radial functions
(i.e., the functions have their peak value at some point and
the value reduces with distance from that point), then
the directed graphs $\mathcal{G}^{(j)}(t)$ for each $j$ has a
rooted tree. The root of the tree is the agent having the maximum
value of $\mathcal{K}_i^{(j)}(t)$ among all agents
$i=1,2,\dots,N$.
\end{lemma}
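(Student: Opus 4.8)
The plan is to argue directly from the defining rule \eqref{eqn:Ki_condition_consensus} together with the radial (unimodal) structure of the basis functions. Fix a parameter index $j$ and a time $t$, and let $i^\star$ be an agent achieving $\mathcal{K}_{i^\star}^{(j)}(t) = \max_{i} \mathcal{K}_i^{(j)}(t)$ (if several agents tie, pick any one). First I would observe that, by \eqref{eqn:Ki_condition_consensus}, for every voronoi neighbour $l$ of $i^\star$ there is a directed edge $i^\star \to l$, since $\mathcal{K}_{i^\star}^{(j)}(t) \geq \mathcal{K}_l^{(j)}(t)$ holds trivially by maximality. So $i^\star$ has a directed edge to each of its neighbours in $\mathcal{G}(t)$. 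The claim to establish is that $i^\star$ in fact reaches \emph{every} agent by a directed path in $\mathcal{G}^{(j)}(t)$, which is exactly the statement that $\mathcal{G}^{(j)}(t)$ contains a spanning rooted tree (rooted at $i^\star$).

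The key step is to upgrade "reaches its neighbours" to "reaches everyone," and this is where connectedness of $\mathcal{G}(t)$ enters. I would proceed by a sweep/BFS argument: since $\mathcal{G}(t)$ is connected, for any target agent $m$ there is an undirected path $i^\star = v_0, v_1, \dots, v_k = m$ in $\mathcal{G}(t)$. If I can replace this undirected path by a directed path in $\mathcal{G}^{(j)}(t)$, I am done. The natural way to do so is to show that from any agent $v$ already known to be reachable from $i^\star$, at least one neighbour "closer" (in graph distance, or in some potential) to $m$ is also reachable. The crucial ingredient here is the radial/unimodal hypothesis on $\mathcal{K}^{(j)}$: the peak of $\mathcal{K}^{(j)}$ lies at some point $c_j \in \mathcal{Q}$, and $\mathcal{K}^{(j)}(q)$ decreases with $\|q - c_j\|$, so the agent closest to $c_j$ is exactly the argmax $i^\star$. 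I would use this to argue that the directed edges of $\mathcal{G}^{(j)}(t)$ always point "downhill" in the value of $\mathcal{K}^{(j)}$, hence "outward" from $i^\star$, and that one can always step from a reachable agent toward $m$: among the voronoi neighbours of the current vertex, order them by $\mathcal{K}^{(j)}$ value; the higher-valued ones have incoming edges, so following the chain of increasing $\mathcal{K}^{(j)}$ values from $m$ backwards one necessarily terminates at a local max of $\mathcal{K}^{(j)}$ restricted to the graph, and I would argue that with a radial basis function on a voronoi-neighbour (delaunay) graph the only such local max is the global one $i^\star$.

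The main obstacle I anticipate is precisely that last point: ruling out spurious local maxima of $q \mapsto \mathcal{K}^{(j)}(q)$ evaluated at agent positions along the delaunay graph. "Radial and decreasing from a peak" makes $\mathcal{K}^{(j)}$ a monotone function of distance to $c_j$, so the vertex-function $i \mapsto \mathcal{K}_i^{(j)}$ is a monotone relabelling of $i \mapsto -\|p_i - c_j\|$; a local maximum of this on the delaunay graph is a vertex $v$ with $\|p_v - c_j\| \le \|p_l - c_j\|$ for all delaunay neighbours $l$ of $v$. One then needs the geometric fact that on a delaunay triangulation such a vertex must be the globally closest point to $c_j$ — equivalently, that the closest site to any point $c_j$ can be found by greedy descent along delaunay edges. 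I would invoke this as a known property of delaunay graphs (the "delaunay graphs support greedy routing to the nearest site" fact, traceable to the empty-circle property); given it, starting from $m$ and repeatedly moving to the delaunay neighbour with strictly larger $\mathcal{K}^{(j)}$ value is a finite strictly increasing process that can only stop at $i^\star$, and reversing this walk yields the desired directed path $i^\star \rightsquigarrow m$ in $\mathcal{G}^{(j)}(t)$. Doing this for every $m$ exhibits a spanning rooted tree with root $i^\star$, and since $i^\star$ was characterized as the agent of maximal $\mathcal{K}_i^{(j)}(t)$ (equivalently, closest to the peak $c_j$), the proof is complete.
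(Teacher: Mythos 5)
Your proposal is correct, and it shares the paper's overall skeleton (identify the root as the agent maximizing $\mathcal{K}_i^{(j)}$, note from \eqref{eqn:Ki_condition_consensus} that every pair of Voronoi neighbours carries at least one directed edge, and then exhibit, for each target agent, a path from the root along which $\mathcal{K}^{(j)}$ is monotone so that all its edges are directed away from the root). Where you differ is in how the existence of that monotone path is justified. The paper argues forward from the root: using connectedness of $\mathcal{G}(t)$ it builds a sequence of nodes ordered by distance from the root agent $n_j$ and appeals to radiality to claim the $\mathcal{K}^{(j)}$ values decrease along it; this step is stated rather informally (in particular, distance from $p_{n_j}$ is not the same as distance from the peak $c_j$ of the basis function, which is what radiality actually controls). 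You instead argue backward from the target by greedy ascent in $\mathcal{K}^{(j)}$, and reduce the whole matter to a single clean geometric fact: on a Delaunay graph, a site whose distance to a point $c_j$ is not globally minimal always has a neighbour strictly closer to $c_j$ (the empty-circle/greedy-routing property), so the only vertex where the ascent can stall is the global maximizer. This buys a sharper and more verifiable argument — the crux is an identifiable, provable property of Delaunay triangulations rather than an ad hoc path construction — at the price of importing that property, and with two caveats you should flag: the graph here is the Voronoi-adjacency graph of cells restricted to $\mathcal{Q}$ (so the empty-circle construction should be checked to produce a neighbour in this restricted graph, which works when the peak $c_j$ lies in the convex set $\mathcal{Q}$), and ties in $\mathcal{K}_i^{(j)}$ values need the same genericity assumption the paper implicitly makes.
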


\begin{proof}
For any $j$ and each pair of agents $(i,l)$ which are voronoi
neighbours, we see from condition
\eqref{eqn:Ki_condition_consensus} that there is always a
directed edge either from $i$ to $l$ or from $l$ to $i$.
Then the agent with the largest value of
$\mathcal{K}_i^{(j)}(t)$, say $n_j$, will only have outgoing
edges, and the agent with the smallest value of
$\mathcal{K}_i^{(j)}(t)$ will only have incoming edges. Given
any node (agent) $l$, there always exists atleast one path in
the delaunay graph $\mathcal{G}(t)$ (since $\mathcal{G}(t)$
is connected) from agent $n_j$ to $l$. In particular,
there exists a path such that the sequence of nodes starting
from $n_j$ in the path are in decreasing order with respect
to the value of $\mathcal{K}^j(\cdot)$, i.e.,
$\mathcal{K}^j(p_{n_j}) \geq \dots \geq \mathcal{K}^j(p_{l})$.
This assertion can be proved as follows: Given node $n_j$, either
$n_j$ is connected to $l$, or it is not. If it is connected,
we are done. If it is not connected, there exists another node
which is at lower distance from node $n_j$ as compared to node
$l$. This new node is either connected to $l$, or there has
to exist another node which is at smaller distance from this
new node as compare to node $l$. This can be continued until
all such nodes are exhausted. If all such nodes are exhausted,
and the node $l$ is still not connected to one of these nodes,
it means that the graph is disconnected which is not possible.
Thus there exists some sequence of nodes in increasing order
of distance from the node $n_j$. This along with the fact that
the functions $\mathcal{K}^j(\cdot)$ are radial proves the
assertion.
Using \eqref{eqn:Ki_condition_consensus} it easily follows that
in the directed graph $\mathcal{G}^{(j)}(t)$, there always
exists a directed path from the root node $n_j$ to any other
node. This completes the proof.
\end{proof}

\noindent Now we modify the adaptation law with each parameter $j$
having seperate consensus law according to the directed sub-graph
$\mathcal{G}^{(j)}(t)$. Thus we have the following adaptation law
for agent $i$'s parameter estimate:
\begin{align}
 b_i &= -F_i(p_i) - \gamma (\Lambda_i \hat{a}_i - \lambda_i) -
 \zeta \sum_{j=1}^N L_{ij} (\hat{a}_i - \hat{a}_j) \\
 &= -F_i(p_i) - \gamma (\Lambda_i \hat{a}_i - \lambda_i) -
 \sum_{\alpha=1}^p \zeta \sum_{j=1}^N l_{ij}^{\alpha}
 \left(\hat{a}_i^{(\alpha)} - \hat{a}_j^{(\alpha)}\right)
 \label{eqn:adaptationlaw_modifiedconsensus}
\end{align}
where $L_{ij} = \mbox{diag}(\{l_{ij}^{\alpha}\}_{\alpha=1}^p)$
and $L^{\alpha} = \left[ l_{ij}^{\alpha} \right]$ is the
laplacian matrix for graph $\mathcal{G}^{(\alpha)}(t)$.

\begin{theorem}
Using the modified adaptation law
\ref{eqn:adaptationlaw_modifiedconsensus}, and assuming that all
the other conditions of thereom \ref{thm:l2differentialdrive}
hold, all the statements of the theorem
\ref{thm:l2differentialdrive} hold.
In addition
\begin{equation}
 \lim_{t \to \infty} (\hat{a}_i(t) - \hat{a}_j(t)) = 0,
\end{equation}
for all $i,j \in \{1,2,\dots,N\}$.
\end{theorem}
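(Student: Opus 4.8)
The plan is to rerun the Lyapunov argument behind Theorem~\ref{thm:l2differentialdrive} with the same energy function and read off the new consensus conclusion from the single extra term that the directed‑consensus part of the adaptation law contributes. I would take the same candidate
\[
 V = d_2(\phi,f_P) + \tfrac12\sum_{i=1}^N \tilde{a}_i^{\trp}\Gamma^{-1}\tilde{a}_i ,
\]
differentiate along the closed loop, and substitute the control law \eqref{eqn:robotcontrol_l2} together with the modified adaptation law \eqref{eqn:adaptationlaw_modifiedconsensus}. The coverage term and the $-F_i(p_i)$, $-\gamma(\Lambda_i\hat{a}_i-\lambda_i)$ pieces of $b_i$ reproduce verbatim the three non‑positive contributions of the proof of Theorem~\ref{thm:l2differentialdrive} — the position‑error term $-\tfrac{4\gamma k_1}{\sigma^2}\sum_i\|p_i-\hat{C}_{\mathcal{V}_i}^{\lambda}\|^2$, the projection term $-\sum_i\tilde{a}_i^{\trp}I_{\beta_i}b_i$ (handled as before, now with the updated $b_i$), and the exponentially‑weighted regressor quadratic form. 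Since the true parameter cancels in estimate differences, i.e. $\hat{a}_i^{(\alpha)}-\hat{a}_j^{(\alpha)}=\tilde{a}_i^{(\alpha)}-\tilde{a}_j^{(\alpha)}$, the only genuinely new term is the directed‑consensus contribution $-\zeta\sum_{\alpha=1}^p(\tilde{a}^{(\alpha)})^{\trp}L^{\alpha}(t)\,\tilde{a}^{(\alpha)}$, with $\tilde{a}^{(\alpha)}=(\tilde{a}_1^{(\alpha)},\dots,\tilde{a}_N^{(\alpha)})^{\trp}$ and $L^{\alpha}(t)$ the Laplacian of $\mathcal{G}^{(\alpha)}(t)$.

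The main obstacle is the sign of this last term. Unlike the undirected consensus of \citet{Schwager2009}, $L^{\alpha}(t)$ is a directed‑graph Laplacian, so it is not symmetric and $(\tilde{a}^{(\alpha)})^{\trp}L^{\alpha}\tilde{a}^{(\alpha)}$ need not be nonnegative (for a lone directed edge it can be negative). To repair this I would exploit the preceding lemma: each $\mathcal{G}^{(\alpha)}(t)$ contains a spanning tree rooted at the agent of maximal $\mathcal{K}^{(\alpha)}(\cdot)$, so $-L^{\alpha}(t)$ is Hurwitz on any complement of $\mathrm{span}\{\mathbf 1\}$, uniformly over the combinatorially finitely many admissible topologies. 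One route is to replace the flat weight $\Gamma^{-1}$ in $V$ by a topology‑dependent diagonal weight $\Xi^{(\alpha)}(t)\succ0$ solving the associated Lyapunov inequality $\Xi^{(\alpha)}L^{\alpha}+(L^{\alpha})^{\trp}\Xi^{(\alpha)}\succeq c I$ on that subspace, treating $V$ as a switched Lyapunov function and bounding its jumps at the finitely many switching types. A cheaper alternative is to first establish statements~1--3 using only a crude perturbation bound on the new term, then observe that statements~1--2 make the coverage/filter forcing of the $\tilde{a}_i$ dynamics vanish, leaving a perturbed consensus recursion over the time‑varying digraphs $\mathcal{G}^{(\alpha)}(t)$ with a uniform rooted spanning tree, to which standard digraph‑consensus convergence results apply.

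Once the new term is controlled, the rest is a transcription of the earlier proof: $V$ is bounded below with $\dot V\le0$, so $V$ converges and $\dot V$ is integrable; boundedness of all closed‑loop signals (via the projection $I_{\beta_i}$) makes $\dot V$ uniformly continuous, and Barbalat's lemma gives $\dot V\to0$. Each non‑positive summand then tends to zero, yielding statement~1 (position), statement~2 (velocity, through the control law), statement~3 (the regressor condition), and in addition the vanishing of the disagreement, i.e. $\hat{a}_i(t)-\hat{a}_j(t)\to0$ for all $i,j$, which is the new claim. I expect the directed‑Laplacian sign / switched‑consensus step to be the only real difficulty; everything else carries over unchanged from Theorem~\ref{thm:l2differentialdrive}.
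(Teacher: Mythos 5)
Your overall skeleton coincides with the paper's: the same Lyapunov function, the same observation that the modified law changes $\dot V$ only through one extra term, which (using $\hat a_i-\hat a_j=\tilde a_i-\tilde a_j$) you correctly write as $-\zeta\sum_{\alpha}\tilde a^{\alpha\trp}L^{\alpha}(t)\,\tilde a^{\alpha}$, and the same Barbalat finish for statements 1--3. Where you diverge is at the treatment of that term. The paper's proof rearranges it into $\sum_{\alpha}\hat a^{\alpha\trp}L^{\alpha}\hat a^{\alpha}$, asserts that its contribution to $\dot V$ is non-positive and uniformly continuous, applies Barbalat's lemma to get $\hat a^{\alpha\trp}L^{\alpha}\hat a^{\alpha}\to 0$, and then invokes the rooted-spanning-tree lemma to conclude $\hat a^{\alpha}\to c_{\alpha}\mathbf 1$ --- i.e.\ it handles the directed Laplacian exactly as one handles the symmetric Laplacian in the undirected consensus of Schwager et al. You decline to grant that sign property, and your scepticism is legitimate: for a non-symmetric $L^{\alpha}$ the quadratic form equals $\tfrac12 x^{\trp}(L^{\alpha}+L^{\alpha\trp})x$, which a single directed edge already renders indefinite, and $x^{\trp}L^{\alpha}x=0$ does not by itself force $x\in\mathrm{span}\{\mathbf 1\}$. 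So you have identified precisely the step on which the whole theorem turns, and which the paper's own proof dispatches by assertion.

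The problem is that your proposal does not then close that step; both repairs are sketches with concrete holes. In the switched-Lyapunov route, replacing $\Gamma^{-1}$ by topology-dependent weights $\Xi^{(\alpha)}(t)$ makes $V$ jump whenever the digraphs $\mathcal G^{(\alpha)}(t)$ switch (which can happen arbitrarily often, since the edge directions depend on the inequality $\mathcal K_i^{(j)}(t)\geq\mathcal K_l^{(j)}(t)$ along the trajectories), so monotonicity and even boundedness of $V$ are lost without a dwell-time or common-Lyapunov argument you do not supply; moreover, changing the weight on $\tilde a_i^{\trp}\Gamma^{-1}\tilde a_i$ destroys the exact cancellations with $-F_i(p_i)$ and $-\gamma(\Lambda_i\hat a_i-\lambda_i)$, so the earlier three terms no longer ``reproduce verbatim.'' The ``cheaper alternative'' is circular: if the consensus term may be positive, a crude perturbation bound does not give $\dot V\leq 0$, so statements 1--3 cannot first be established by the very Lyapunov argument you intend to reuse; and even granting them, the forcing in the $\tilde a_i$ dynamics only vanishes asymptotically, so the ``perturbed consensus over time-varying digraphs with a rooted spanning tree'' step needs an explicit ISS/robust-consensus argument, which is named but not proved. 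In short: you located the genuine difficulty (more carefully than the paper, whose proof assumes the undirected-Laplacian sign and null-space properties for the directed $L^{\alpha}$), but your attempt leaves that central step open, so it is not a complete proof of the theorem.
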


\begin{proof}
Proceeding the same way as in the proof of theorem
\ref{thm:l2differentialdrive}, we have an additional term in the
derivative of the lyapunov function $\dot{V}$, all other terms
remaining exactly the same. The term is given by
\[ T = -\sum_{i=1}^{N} \tilde{a}_i^{\trp} \sum_{j=1}^N L_{ij}
(\hat{a}_i - \hat{a}_j). \]
Simplifying this, we have
\begin{align*}
 T
 &= \sum_{i=1}^N \left[ \tilde{a}_i^{(1)} \tilde{a}_i^{(2)}
 \dots \tilde{a}_i^{(p)} \right]
 \sum_{j=1}^N L_{ij} \left[ \begin{array}{c}
 \hat{a}_i^{(1)} - \hat{a}_j^{(1)} \\
 \hat{a}_i^{(2)} - \hat{a}_j^{(2)} \\
 \vdots \\
 \hat{a}_i^{(p)} - \hat{a}_j^{(p)}
 \end{array} \right] \\
 &= \sum_{\alpha=1}^{p} \sum_{i=1}^N \sum_{j=1}^N
 \tilde{a}_i^{(\alpha)}
 l_{ij}^{\alpha} (\hat{a}_i^{(\alpha)} - \hat{a}_j^{(\alpha)}) \\
 &= \sum_{\alpha=1}^p \left[
 \tilde{a}_1^{(\alpha)} 
 \dots \tilde{a}_N^{(\alpha)} \right]
 \left[ \begin{array}{cccc}
 \sum_{i=2}^N l_{1i}^{\alpha}  &
 \dots  &  \dots  &  -l_{1N}^{\alpha} \\
 -l_{21}^{\alpha}  &
 \dots
 &  \dots  &  -l_{2N}^{\alpha} \\
 \vdots & \vdots & \vdots & \vdots \\
 -l_{N1}^{\alpha} & \dots & \dots &
 \vdots
 \end{array} \right]
 \left[ \begin{array}{c}
 \hat{a}_1^{(\alpha)} \\
 \hat{a}_2^{(\alpha)} \\
 \vdots \\
 \hat{a}_N^{(\alpha)}
 \end{array} \right] \\
 &= \sum_{\alpha=1}^p \tilde{a}^{{\alpha}^{\trp}}
 L^{\alpha} \hat{a}^{\alpha}
 = \sum_{\alpha=1}^p \hat{a}^{{\alpha}^{\trp}}
 L^{\alpha} \hat{a}^{\alpha}
\end{align*}
where $\hat{a}^{\alpha} = \left[ \hat{a}_1^{(\alpha)} \,
\hat{a}_2^{(\alpha)} \, \dots \hat{a}_N^{(\alpha)} \right]^{\trp}$.
\par \noindent Thus the term contributed by the consensus term in
$\dot{V}$ is non-positive. The term is also uniformly continuous.
The other terms in $\dot{V}$ remain non-positive and uniformly
continuous as in the proof of theorem
$\ref{thm:l2differentialdrive}$. Thus using Barbalat's lemma,
all the statements of the theorem $\ref{thm:l2differentialdrive}$
holds. In addition, we have that $\lim_{t \to \infty}
\hat{a}^{{\alpha}^{\trp}} L^{\alpha} \hat{a}^{\alpha} = 0$ for
each $\alpha$. Since $L^{\alpha}$ is the laplacian matrix of the
directed graph $\mathcal{G}^{(\alpha)}$, we have that
$\lim_{t \to \infty} \hat{a}^{\alpha} = c_{\alpha}\mathbf{1}$
for each $\alpha$ ($c_{\alpha}$ is some positive constant),
i.e. the agents achieve consensus on the parameter values and
the theorem holds.
\end{proof}

\section{Hardware Implementation and Experimental Results}
\label{sec:hardware}
In this section, we discuss the details of hardware
implementation as well as the experimental results.

\subsection{Experiment Setup}
\label{sec:hardware-setup}
The experimental setup consists of five differential drive robots,
based on the turtlebot3 platform, with OpenCR 1.0 controller
module and Raspberry Pi 3 module mounted on each of them.

\subsubsection{Workspace and Localization System}
The workspace where the robots move is a flat
$4 \mbox{m} \times 4 \mbox{m}$ square region.
For localization of robots, we use the motion-capture system
from Optitrack. The system comprises of 16 cameras with infrared
sensors which detect the markers fixed atop the robots.
A proprietary software (Motive, by Optitrack) uses data
in the form of images captured by the cameras, performs
localization computations, and provides position data for all
the robots in the workspace. This data is streamed over the
local network using the Virtual Reality Peripheral Network
\footnote{https://github.com/vrpn/vrpn/wiki}
(VRPN) protocol. The localization system provides
millimeter-level precision at high frequencies upto 200 Hz.
See figure \ref{fig:exptsetup1} for the overall setup.
\begin{figure*}
 \centering
 \includegraphics[scale=0.75]{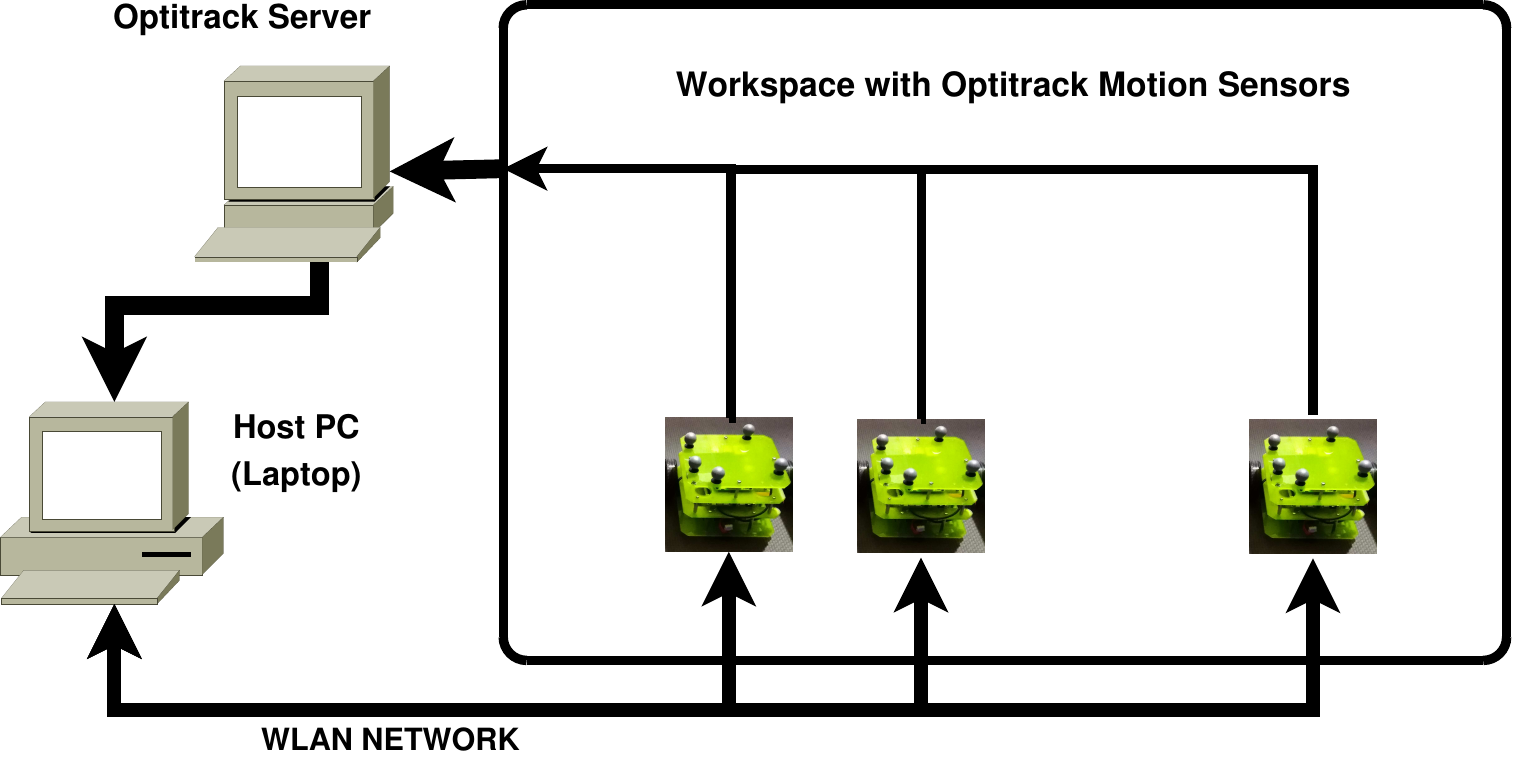}
 \caption{Experiment Setup}
 \label{fig:exptsetup1}
\end{figure*}
The robots communicate with the host PC via WLAN. ROS (Robot
Operating System) is used for the software implementation.
Each robot runs multiple ROS nodes which implements the coverage
algorithm, receives localization data from the Optitrack system
as well as communicate with other robots. The host PC runs the
ROS master node and subscribes to the position data from the
localization system (using VRPN protocol) which are then
distributed to the individual robots. Figure \ref{fig:ros_setup}
shows the overall software implementation using ROS.
\begin{figure*}
 \centering
 \includegraphics[scale=0.7]{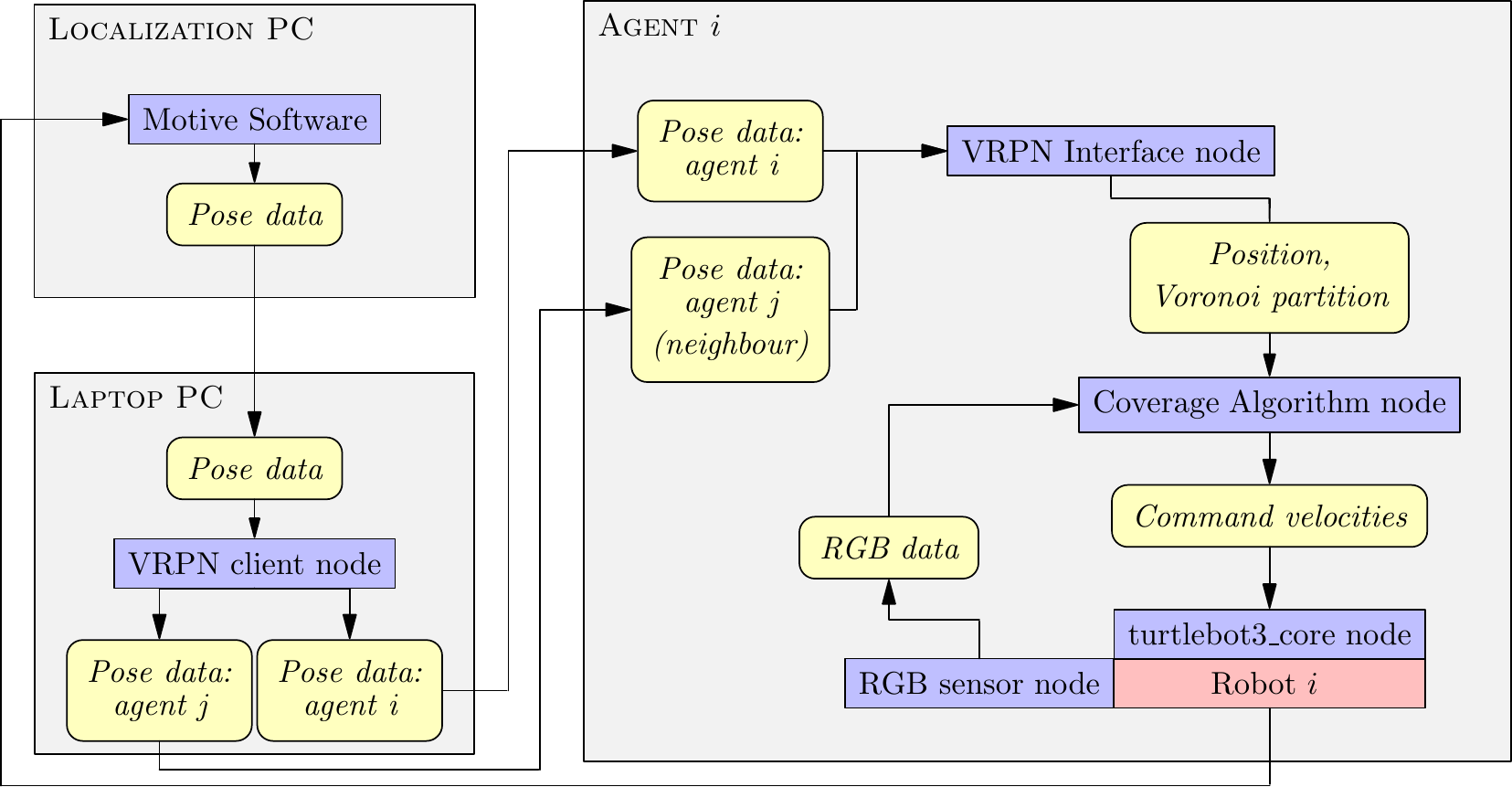}
 \caption{ROS Implementation: the blue blocks represent ROS
 nodes and the yellow blocks represent the data exchanged
 between nodes.}
 \label{fig:ros_setup}
\end{figure*}

\subsubsection{Sensors}
The density function is implemented as a light distribution
using Xiaomi smart bulbs. The Adafruit TCS34725 RGB sensors
are used for measuring the light intensity.

\subsection{Experimental Results}
\label{sec:hardware-results}
We do two sets of experiments: (1) The density function is
simulated, and (2) the density function is implemented using
white light sources, and the agents measure the light intensity
using light sensors. \\
The simulated density function case allows us to study the
performance of the coverage algorithm and parameter convergence
in better detail since there is no sensor noise and associated
issues. Implementing actual sources and sensors allows us to
evaluate how well the algorithms behave in the real world with
noisy sensors.
The values of various constants used in the simulation are given
in table \ref{tab:experiment_constants}.
\begin{table*}
 \centering
 \begin{tabular}{lcp{5.5cm}}
 \toprule 
 Parameter & Value & Description\\
 \cmidrule(r){1-1} \cmidrule(lr){2-2} \cmidrule(l){3-3} 
 \multicolumn{3}{l}{\textbf{\textit{Domain related}}} \\
 %
 borderx& [-2.0, 2.0, 2.0, -2.0] &
 x-coordinates of vertices of the domain \\
 bordery& [-2.0, -2.0, 2.0, 2.0] &
 y-coordinates of vertices of the domain \\
 %
 \midrule
 \multicolumn{3}{l}{\textbf{\textit{Density function related}}} \\
 centrex& [1.0, 1.0] & x-coordinates of centres of density fcn.\\
 centrey& [0.98, -0.8] & y-coordinates of centres of density fcn.\\
 $\sigma$& [0.6, 0.3] & std. deviation\\
 $a$& [85, 30] & true strengths to be estimated\\
 %
 \midrule
 \multicolumn{3}{l}
 {\textbf{\textit{Control and Adaptation gains}}} \\
 $k_1$& $0.1$ & controller gain\\
 $\Gamma$& $0.1 I$ & adaptation gain matrix\\
 %
 \midrule
 \multicolumn{3}{l}{\textbf{\textit{Loop rates}}} \\
 Control loop& $10$ Hz. & rate at which control loop runs\\
 Position update loop& $20$ Hz. &
 rate at which position data is available\\
 %
 \midrule
 \multicolumn{3}{l}{\textbf{\textit{Robot related}}} \\
 $d$& $0.05$ m & distance of the centre from wheel axis\\
 %
 \midrule
 \multicolumn{3}{l}{\textbf{\textit{Adaptation law related}}} \\
 paramInitValue& 10 & initial value of parameter estimates\\
 $\alpha$& 1.0 & filter parameter\\
 $\gamma$& 2 & measurement update gain\\
 $\zeta$& 1 & consensus related gain\\
 %
 %
 \bottomrule
 \end{tabular}
 \caption{Experiment related parameters}
 \label{tab:experiment_constants}
\end{table*}

\subsubsection{Simulated Density Function}
The true density function consists of two gaussian components.
The various constants related to the density function are given
in table \ref{tab:experiment_constants}. The trajectories,
average position error and the average velocity of the agents are
shown in figure \ref{fig:l2_adaptive_noconsensus_sim}. The
average position and velocity errors are given by
\begin{align*}
	e_p &= \sum_{i=1}^N \|p_i - \hat{C}_{\mathcal{V}_i}\|, \qquad
	e_v = \sum_{i=1}^N \|v_i\|.
\end{align*}
The corresponding plots for the locational optimization based
coverage is also shown in the figure for comparison. The initial
position error and velocity are higher for the locational
optimization case. The agent
parameter estimation errors are compared in figures
\ref{fig:par1error_noconsensus_sim} and
\ref{fig:par2error_noconsensus_sim} for the two parameters.
It can be seen that for three of the agents the
estimation errors for parameter $1$ drops significantly from
the initial value, with one agent able to estimate parameter $1$
accurately. The other two agents are not able to adapt for
parameter $1$. Similarly two of the agents are able to estimate
parameter $2$ better while for the rest of the agents the
parameter $2$ estimation error barely change from the initial
estimation error. It also appears that the $\mathcal{L}^2$
algorithm performs slightly better from figure
\ref{fig:par1error_noconsensus_sim}. This could be due
to the fact that the integral term in the $\mathcal{L}^2$
adaptation law \eqref{eqn:adaptation2_l2} is much smaller
(due to the presence of the exponential term)
than for the locational optimization adaptation law
\eqref{eqn:adaptationlaw_locopt}. This term can be viewed as
a coupling term between the coverage task (through the cost
function $d_2(\cdot)$) and the estimation task. The term
being smaller means that the estimate $\hat{a}_i$ is better
able to adapt through the measured error in $\phi(\cdot)$
given by the second term in the adaptation law
\eqref{eqn:adaptation2_l2}.
The parameter errors for adaptation with the consensus terms
are shown in figures \ref{fig:avgparerror_consensus_sim} and
\ref{fig:avgparerror_dirconsensus_sim}. We show the average
parameter estimation errors across all the agents for ease of
comparison since the agent estimation errors closely follow
each other due to the consensus term. From the plots we see
that overall, the parameter errors starts dropping faster in the
locational optimization case although towards the end the drop
in error becomes slower compared to the $\mathcal{L}^2$ case.
This could be due to the fact that
the initial velocity is higher for the locational optimization
case and thus it is able to initially move faster to regions
where significant measurements are available for adaptation.
Overall the final values of parameter estimates are
slightly better for the $\mathcal{L}^2$ case.
It can also be seen that the directional consensus algorithm
leads to a significantly faster convergence as opposed to the
undirected consensus algorithm.
\begin{figure}
 \centering
 \subfloat[Trajectories]{
 \includegraphics
 [scale=0.7]
 {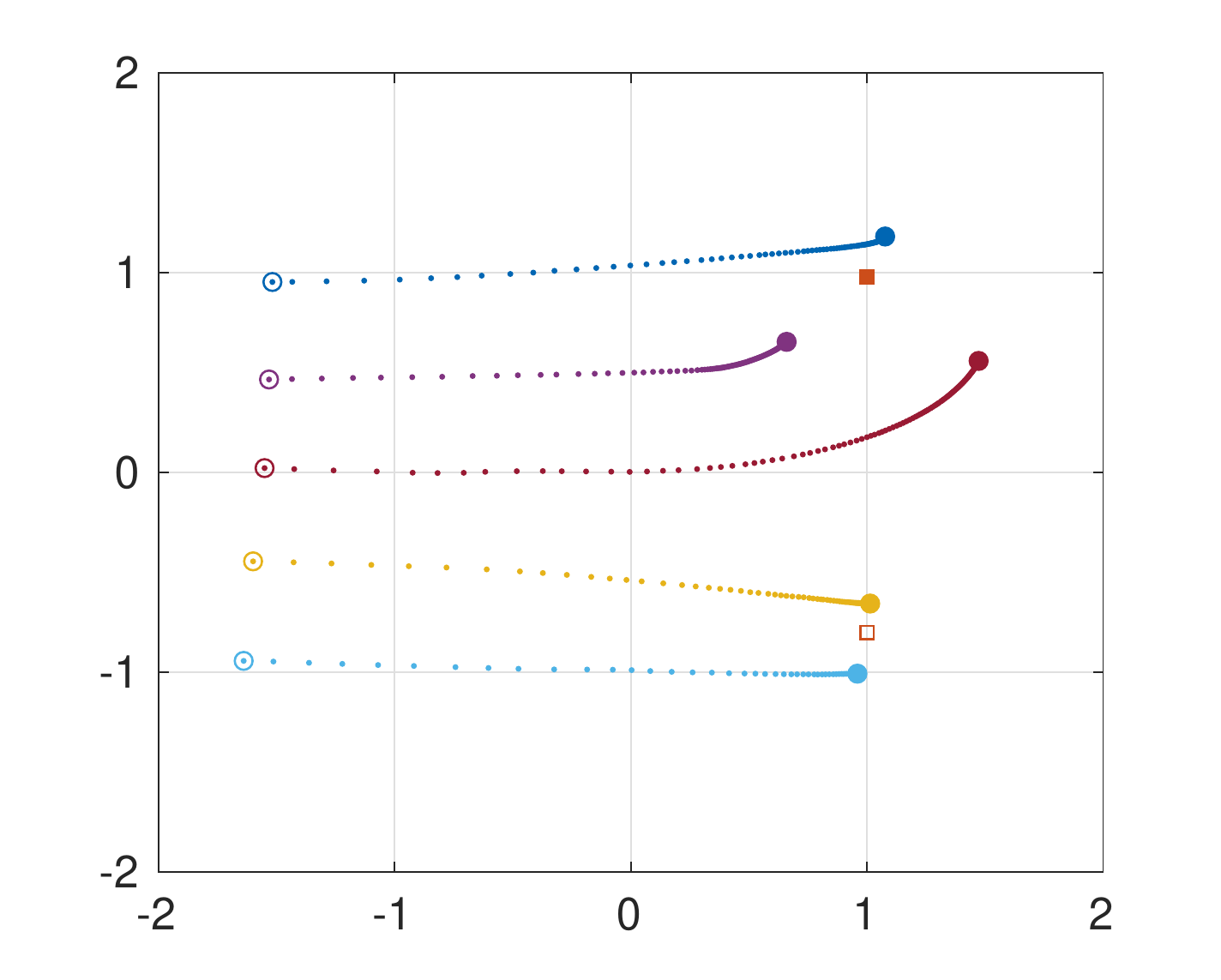}} \\
 \subfloat[Avg. Position error vs Time (sec).]{
 \includegraphics
 [height=0.2\textheight, width=0.9\textwidth]
 {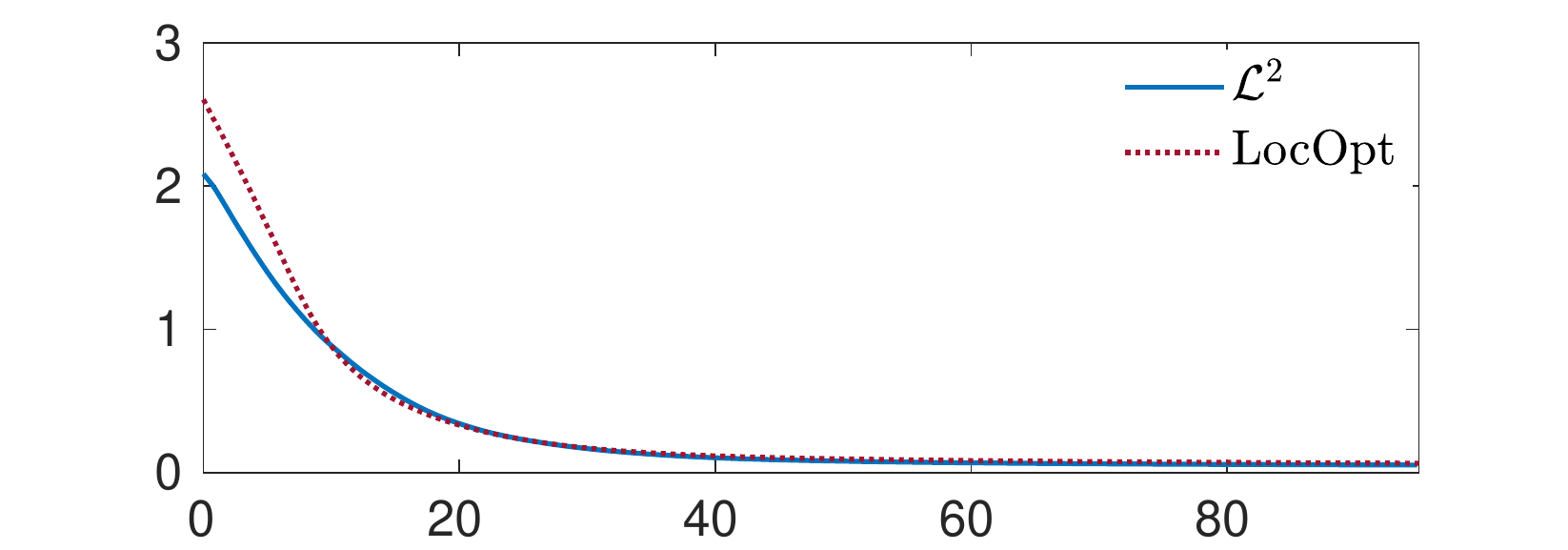}} \\
 \subfloat[Avg. Velocity vs Time (sec).]{
 \includegraphics
 [height=0.2\textheight, width=0.9\textwidth]
 {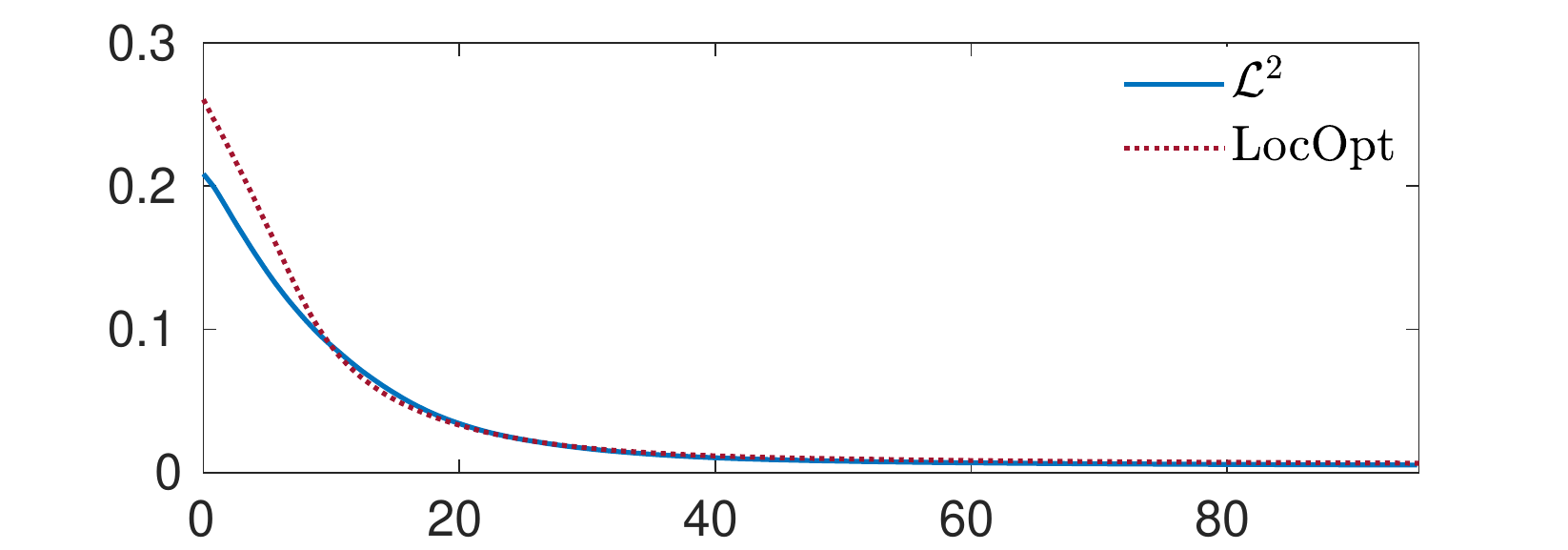}}
 \caption{Results for $\mathcal{L}^2$ coverage: simulated
 $\phi(\cdot)$.}
 \label{fig:l2_adaptive_noconsensus_sim}
\end{figure}

\begin{figure}
 \centering
 \subfloat[$\mathcal{L}^2$ coverage]{
 \includegraphics
 [height=0.22\textheight, width=0.5\textwidth]
 {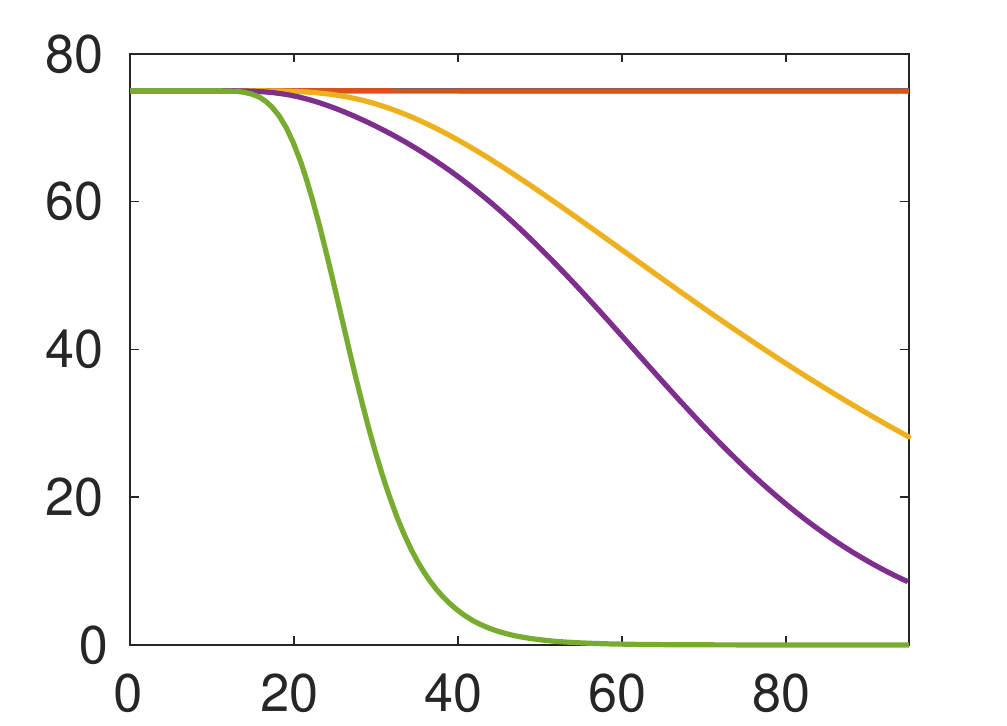}}
 \subfloat[LocOpt coverage]{
 \includegraphics
 [height=0.22\textheight, width=0.5\textwidth]
 {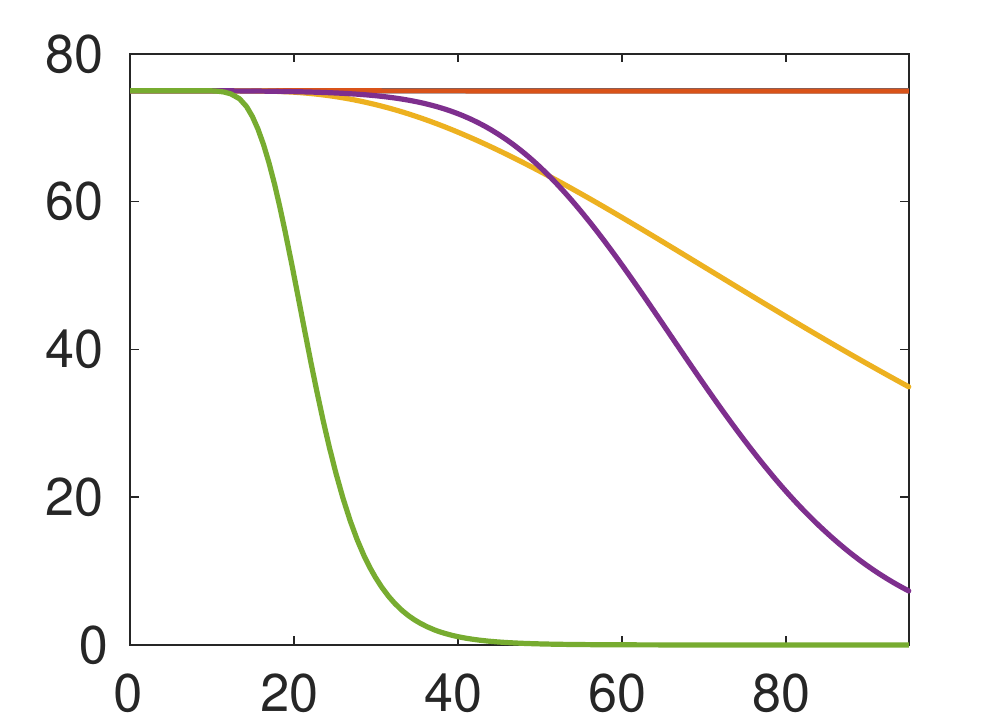}}
 \caption{Parameter $1$ estimation errors with time - No consensus.}
 \label{fig:par1error_noconsensus_sim}
\end{figure}

\begin{figure}
 \centering
 \subfloat[$\mathcal{L}^2$ coverage]{
 \includegraphics
 [height=0.22\textheight, width=0.5\textwidth]
 {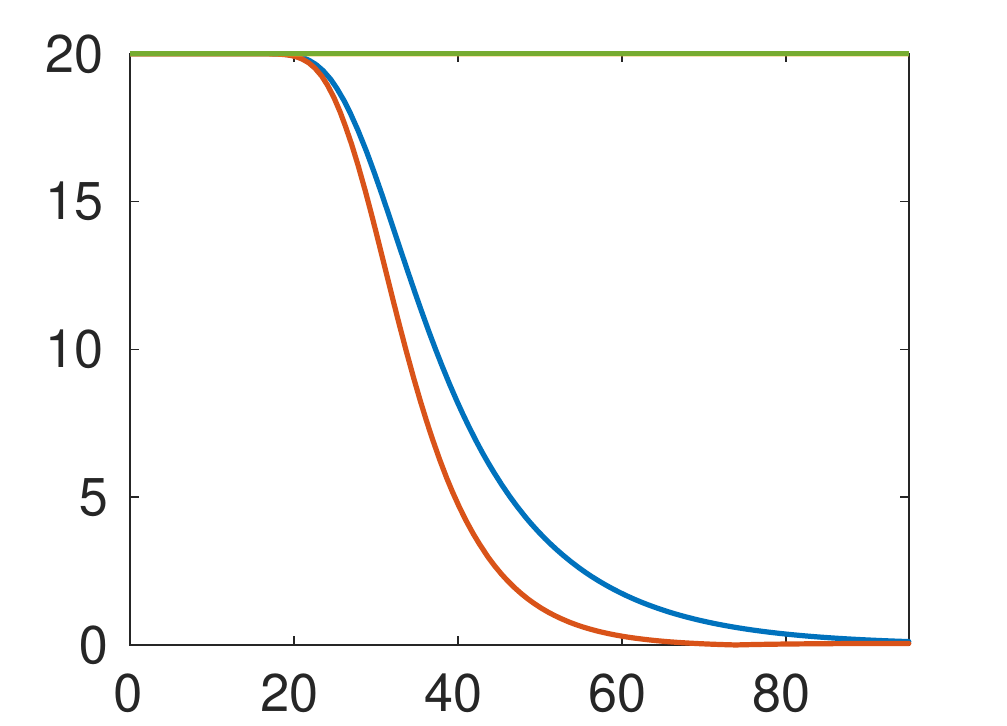}}
 \subfloat[LocOpt coverage]{
 \includegraphics
 [height=0.22\textheight, width=0.5\textwidth]
 {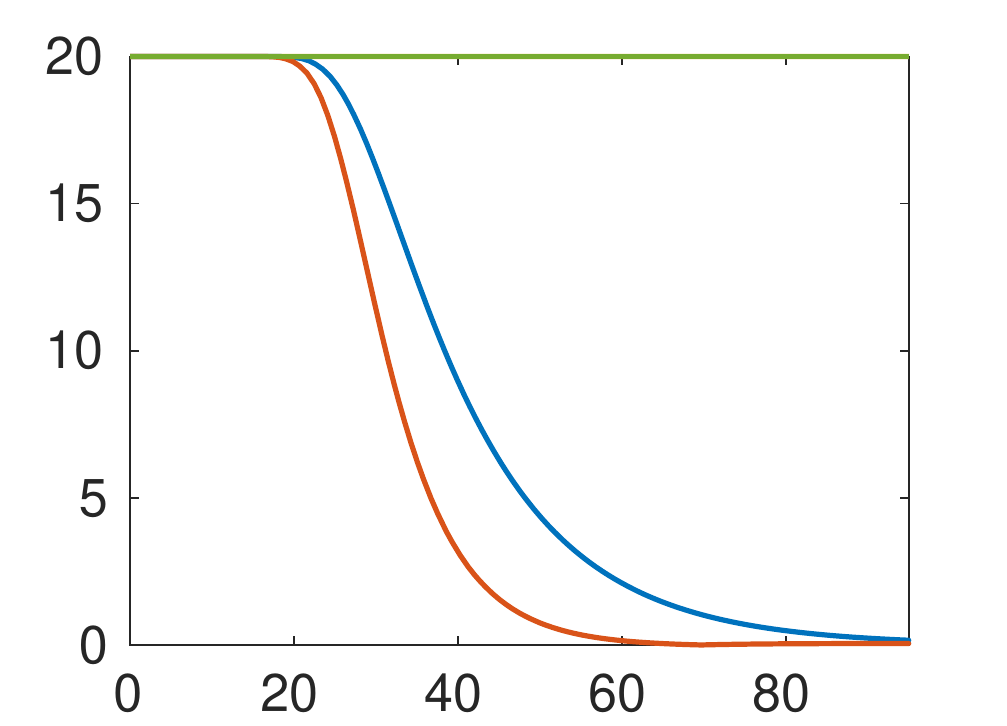}}
 \caption{Parameter $2$ estimation errors with time - No consensus.}
 \label{fig:par2error_noconsensus_sim}
\end{figure}

\begin{figure}
 \centering
 \subfloat[Parameter $1$]{
 \includegraphics
 [height=0.22\textheight, width=0.5\textwidth]
 {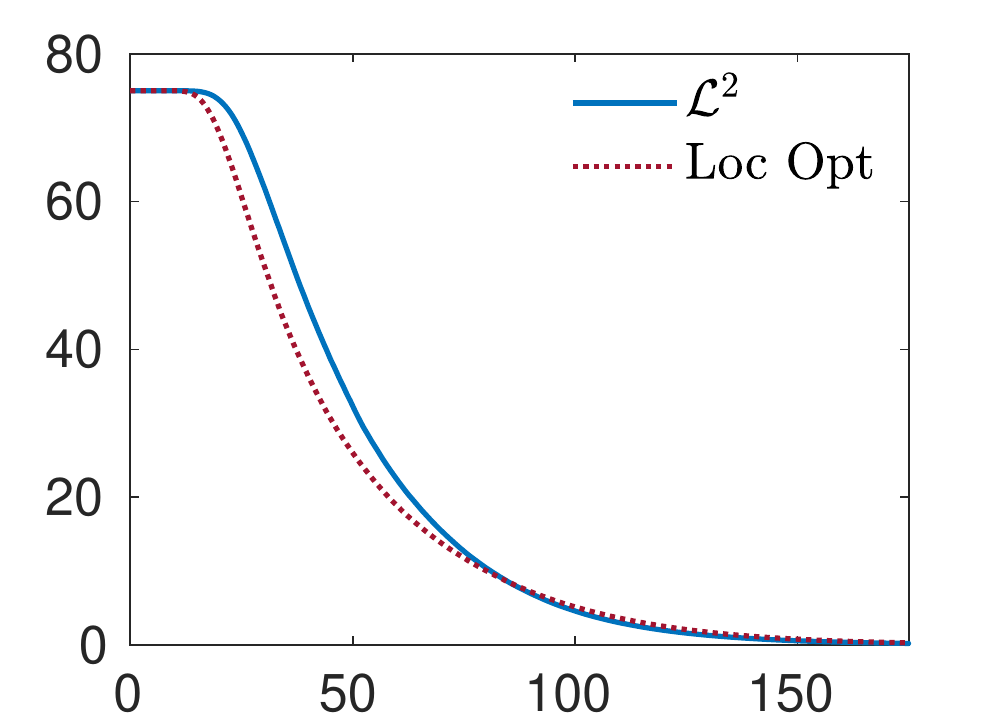}}
 \subfloat[Parameter $2$]{
 \includegraphics
 [height=0.22\textheight, width=0.5\textwidth]
 {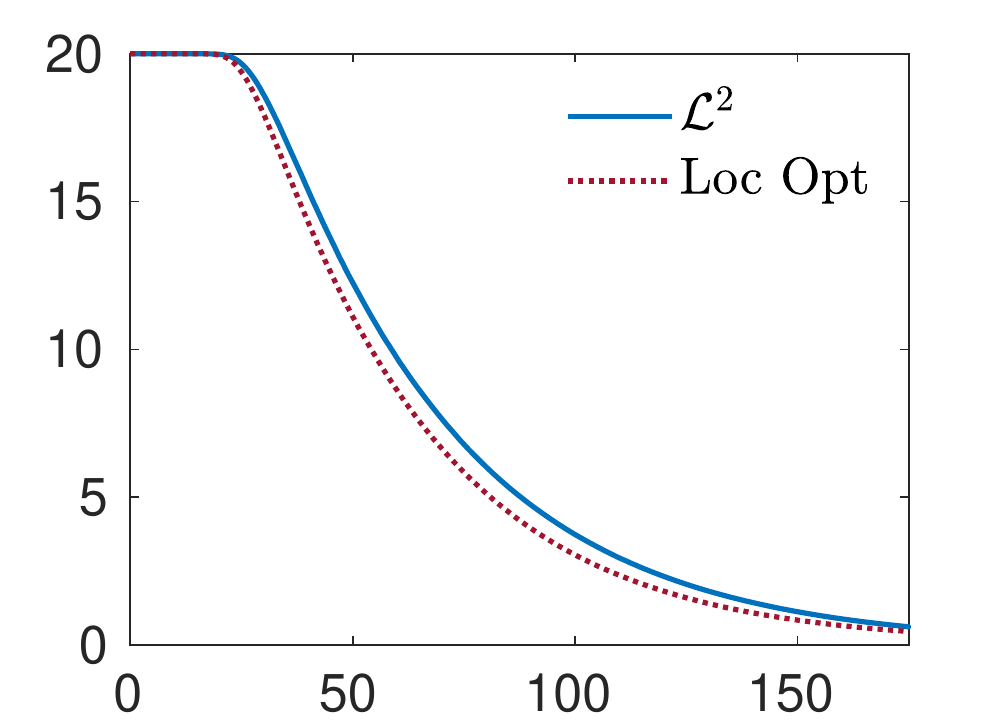}}
 \caption{Avg. parameter estimation errors with time - with
 consensus.}
 \label{fig:avgparerror_consensus_sim}
\end{figure}

\begin{figure}
 \centering
 \subfloat[Parameter $1$]{
 \includegraphics
 [height=0.22\textheight, width=0.5\textwidth]
 {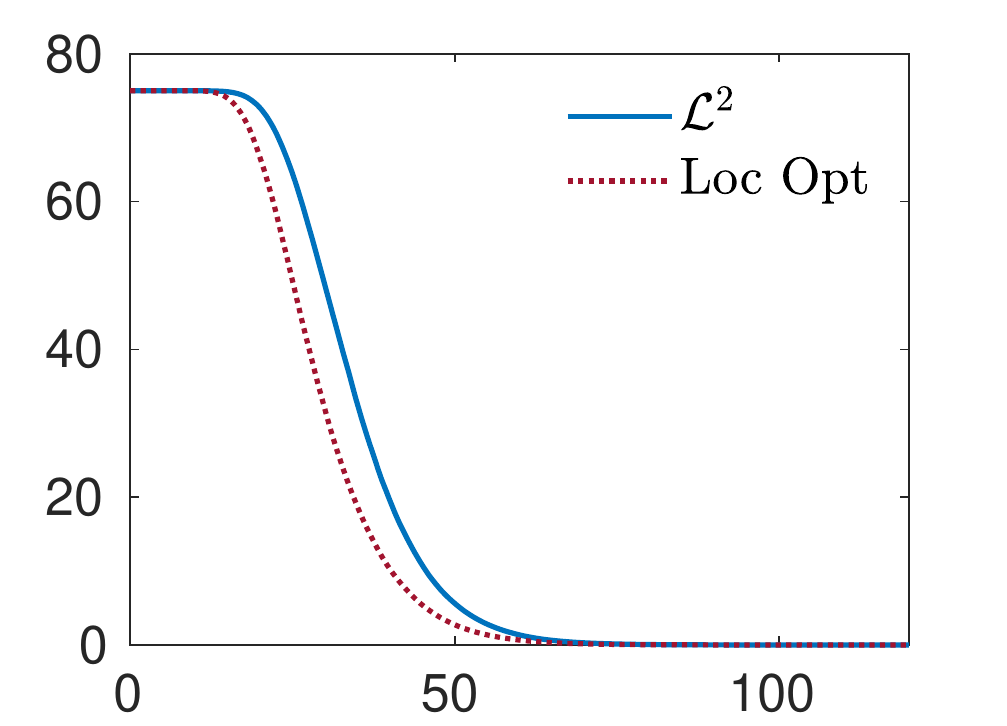}}
 \subfloat[Parameter $2$]{
 \includegraphics
 [height=0.22\textheight, width=0.5\textwidth]
 {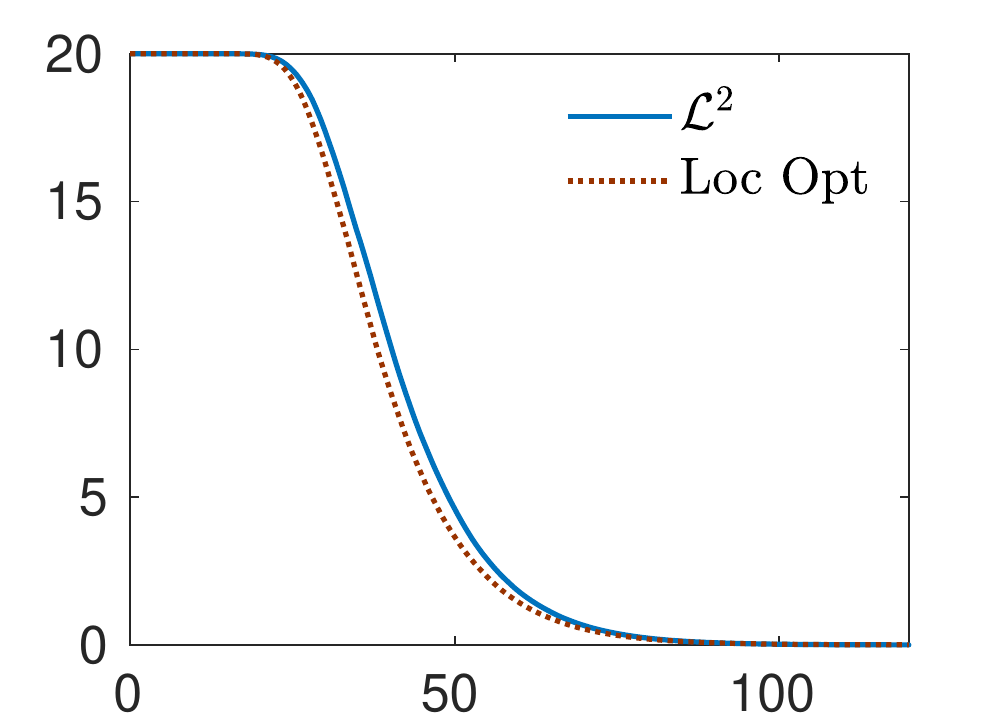}}
 \caption{Avg. parameter estimation errors with time - with
 directed consensus.}
 \label{fig:avgparerror_dirconsensus_sim}
\end{figure}

\subsubsection{Density Function implemented using Light Sources}
There are two light sources, one of high intensity and the other of
lower intensity. Each agent is equipped with TCS34725 RGB sensors
to measure the light intensity.
The trajectories, average position error and the average velocity
of the agents for $\mathcal{L}^2$ coverage are shown in figure
\ref{fig:l2_adaptive_noconsensus_rgb}. The plots also show
the results of locational optimization based coverage for comparison.
As with the results for simulated density function, we see that the
initial position error and velocity are larger for the locational
optimization case. The agent parameter estimation errors are
compared in figures \ref{fig:par1error_noconsensus_rgb} and
\ref{fig:par2error_noconsensus_rgb} for the two parameters with
no consensus term in the adaptation law.
The parameter errors for adaptation with the consensus terms
are shown in figures \ref{fig:avgparerror_consensus_rgb} and
\ref{fig:avgparerror_dirconsensus_rgb}. Overall the parameter
estimates using the $\mathcal{L}^2$ coverage framework seems to
be more accurate. It can also be seen that the direted consensus
leads to faster convergence of the parameter errors as expected.
\begin{figure}
 \centering
 \subfloat[Trajectories]{
 \includegraphics
 [scale=0.8]
 {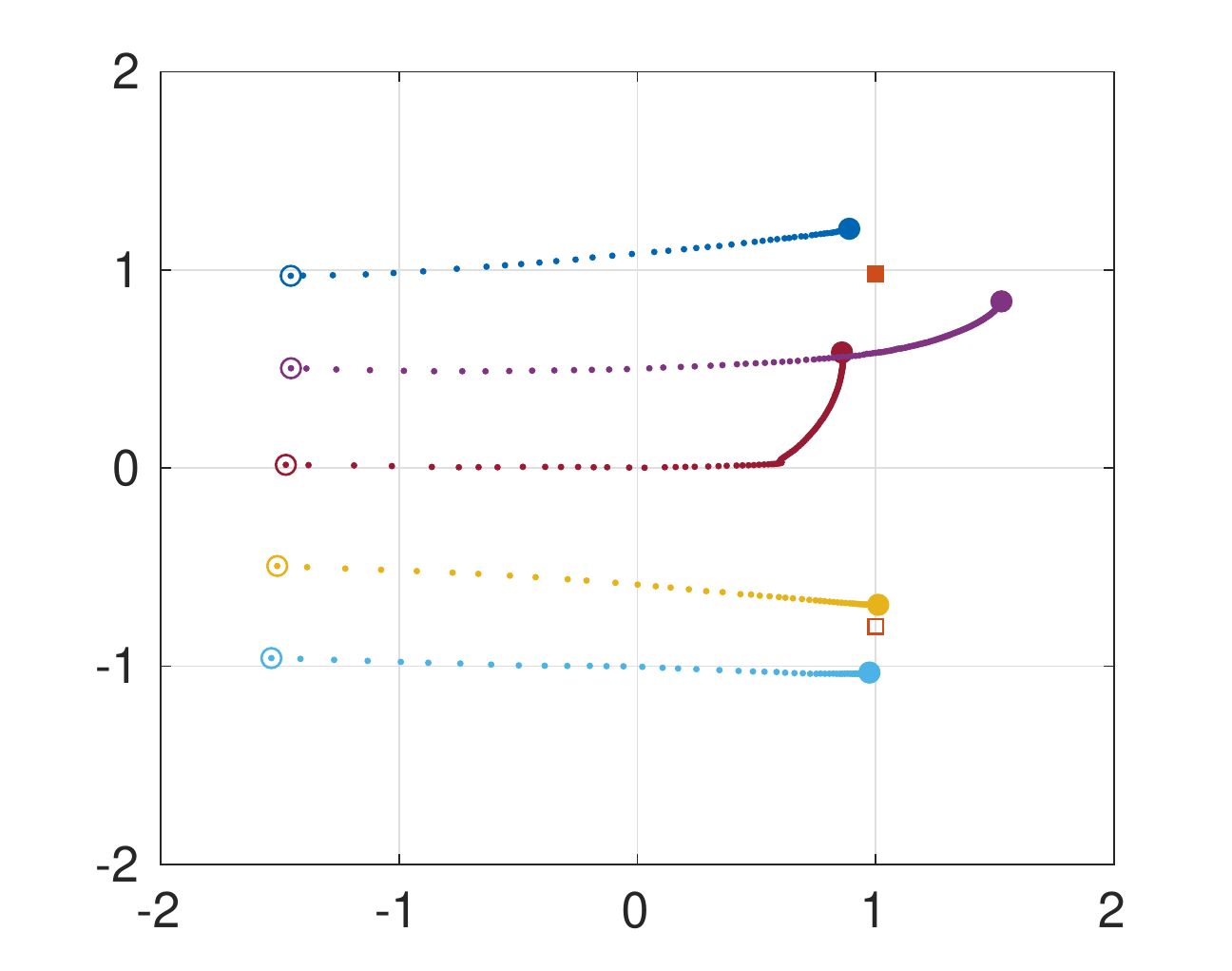}} \\
 \subfloat[Avg. Position error vs Time (sec).]{
 \includegraphics
 [height=0.2\textheight, width=0.9\textwidth]
 {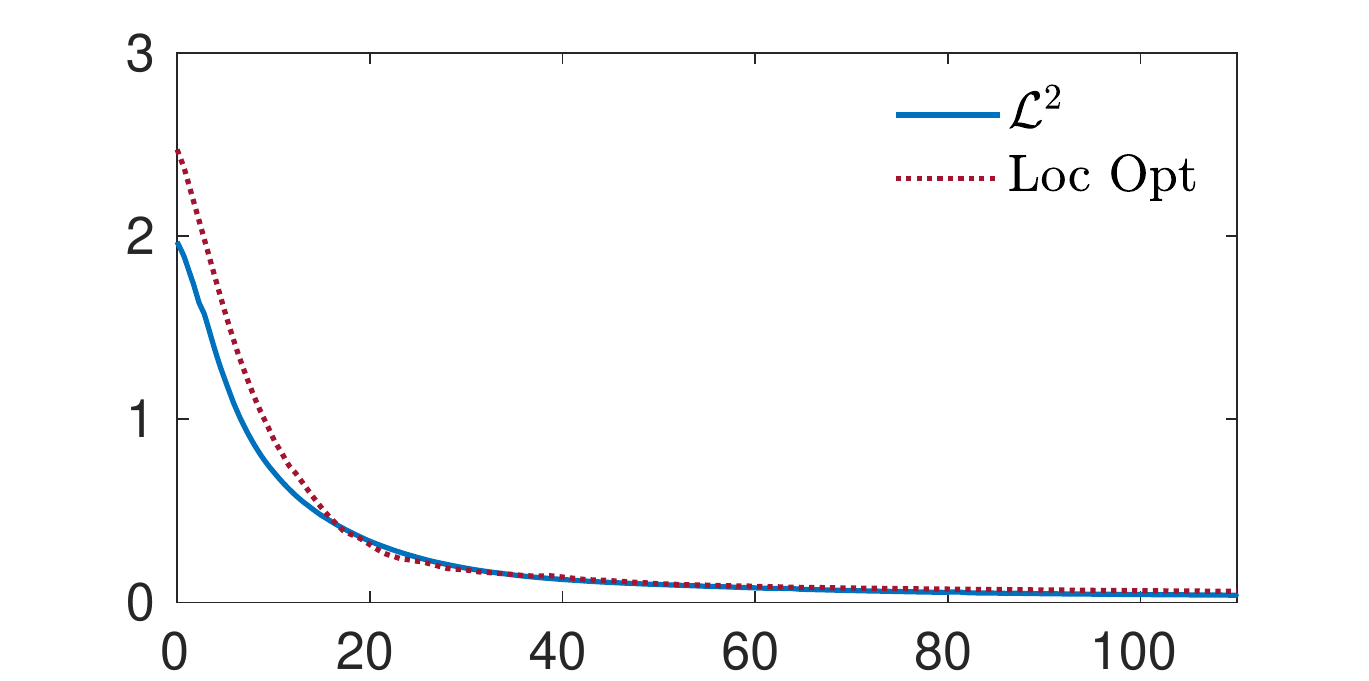}} \\
 \subfloat[Avg. Velocity vs Time (sec).]{
 \includegraphics
 [height=0.2\textheight, width=0.9\textwidth]
 {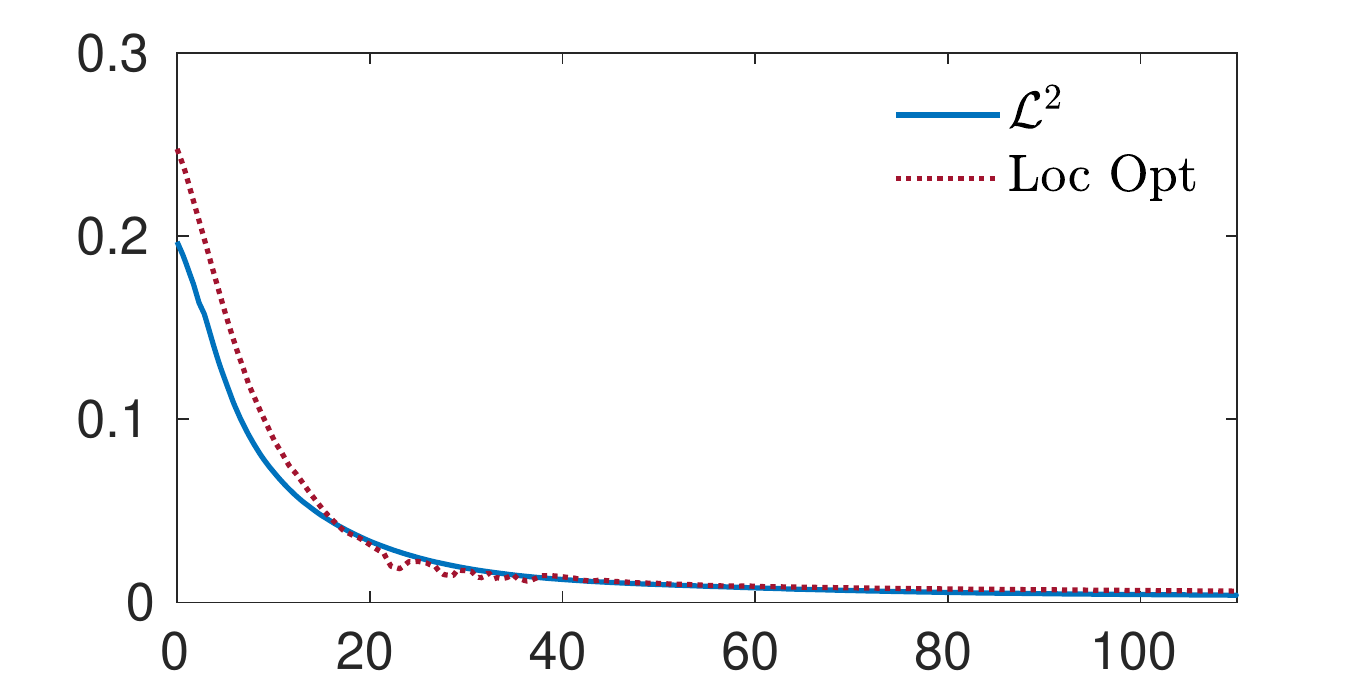}}
 \caption{Results for $\mathcal{L}^2$ coverage: RGB sensors.}
 \label{fig:l2_adaptive_noconsensus_rgb}
\end{figure}

\begin{figure}
 \centering
 \subfloat[$\mathcal{L}^2$ coverage]{
 \includegraphics
 [height=0.22\textheight, width=0.5\textwidth]
 {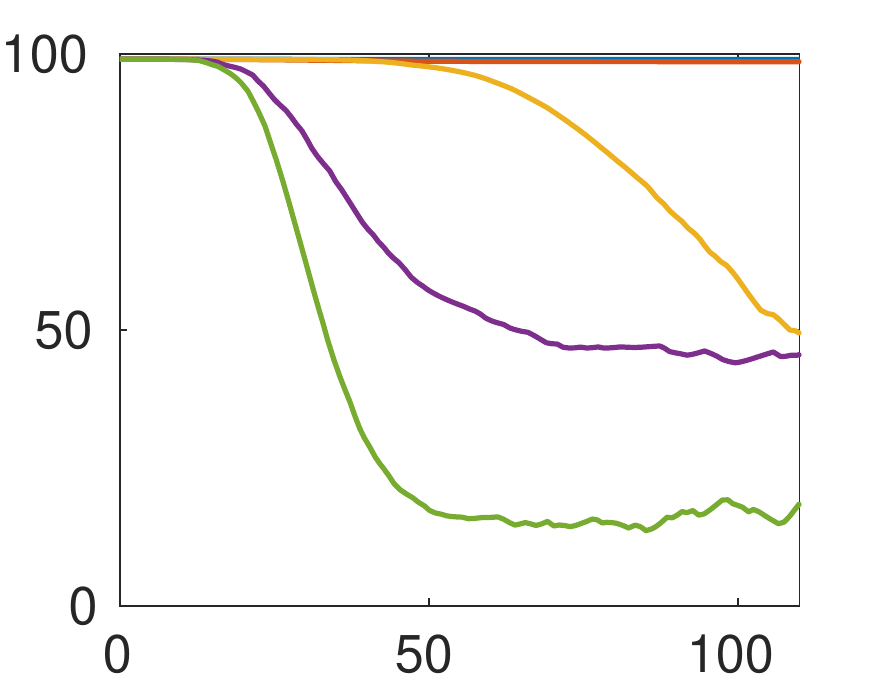}}
 \subfloat[LocOpt coverage]{
 \includegraphics
 [height=0.22\textheight, width=0.5\textwidth]
 {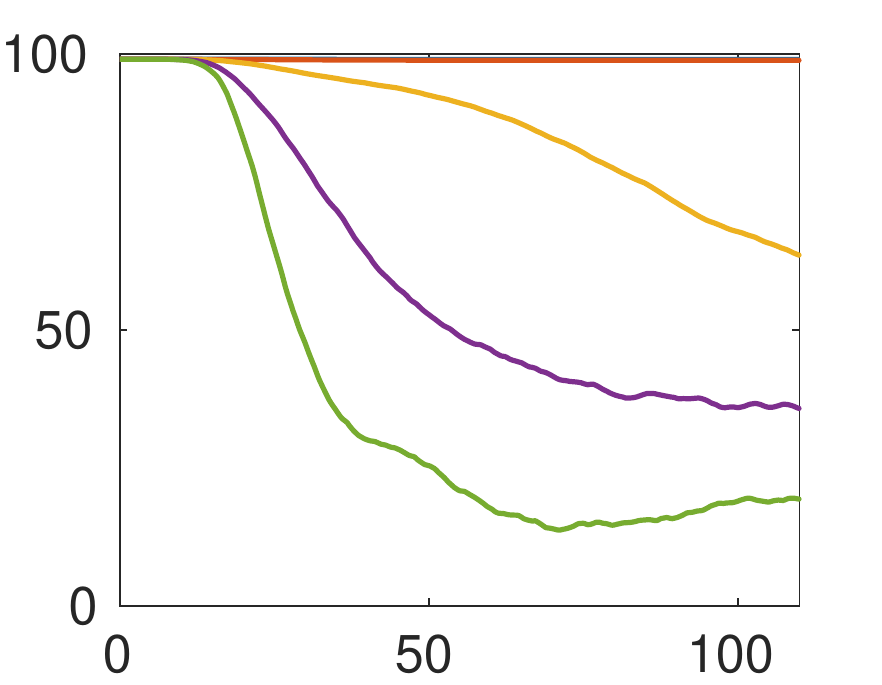}}
 \caption{Parameter $1$ estimation errors with time - No consensus.}
 \label{fig:par1error_noconsensus_rgb}
\end{figure}

\begin{figure}
 \centering
 \subfloat[$\mathcal{L}^2$ coverage]{
 \includegraphics
 [height=0.22\textheight, width=0.5\textwidth]
 {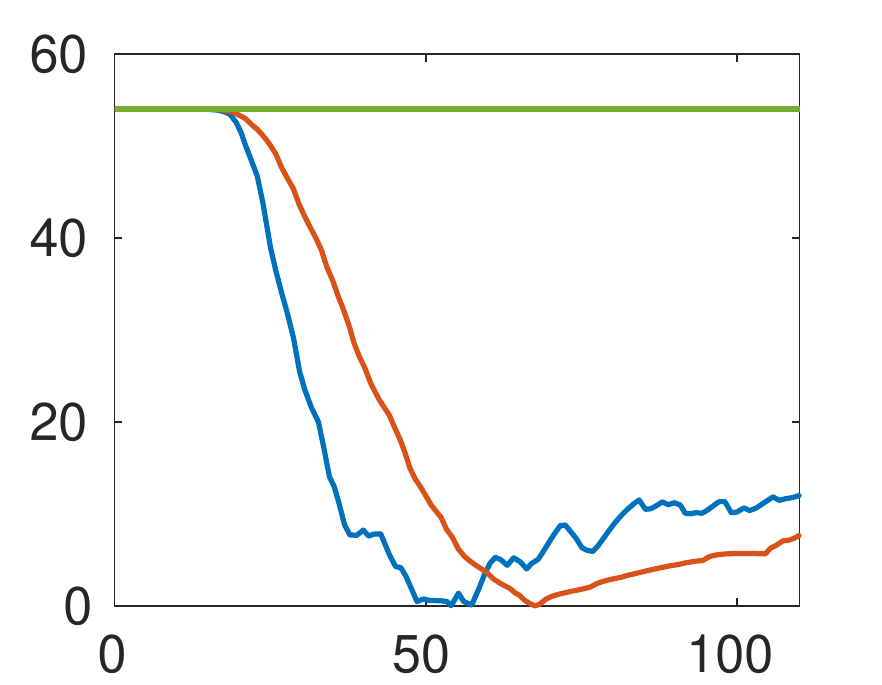}}
 \subfloat[LocOpt coverage]{
 \includegraphics
 [height=0.22\textheight, width=0.5\textwidth]
 {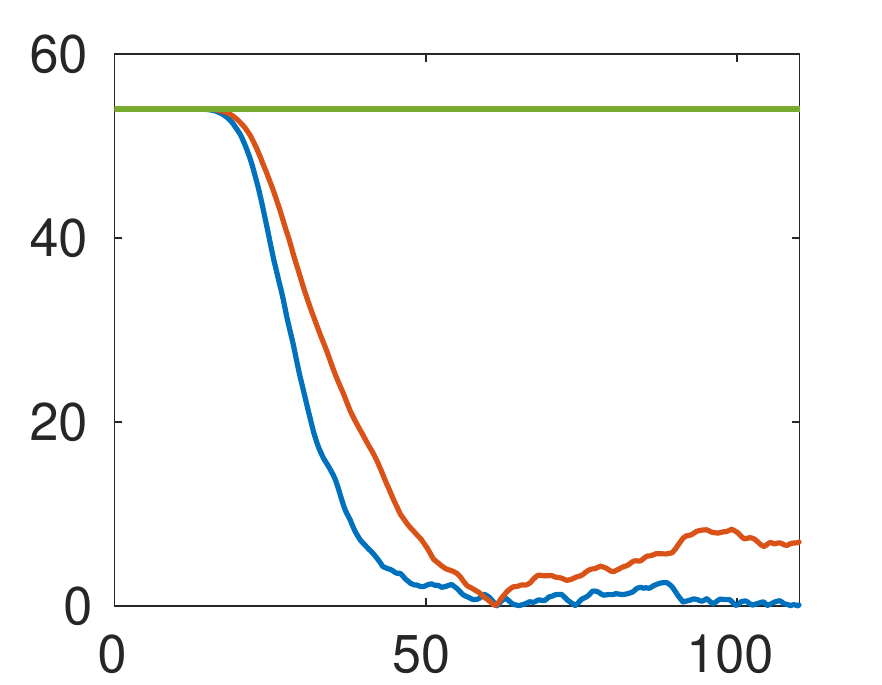}}
 \caption{Parameter $2$ estimation errors with time - No consensus.}
 \label{fig:par2error_noconsensus_rgb}
\end{figure}

\begin{figure}
 \centering
 \subfloat[Parameter $1$]{
 \includegraphics
 [height=0.22\textheight, width=0.5\textwidth]
 {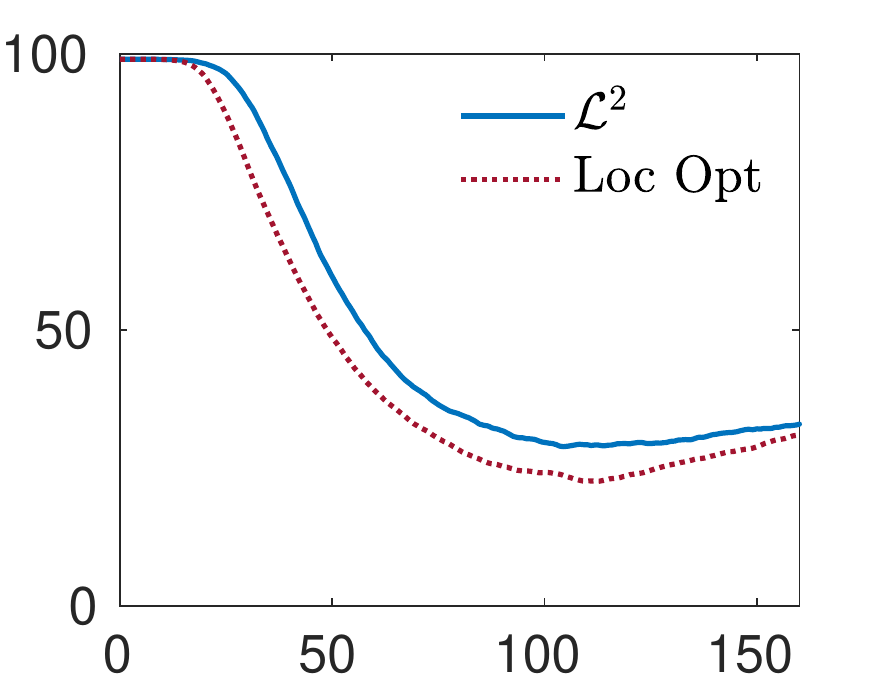}}
 \subfloat[Parameter $2$]{
 \includegraphics
 [height=0.22\textheight, width=0.5\textwidth]
 {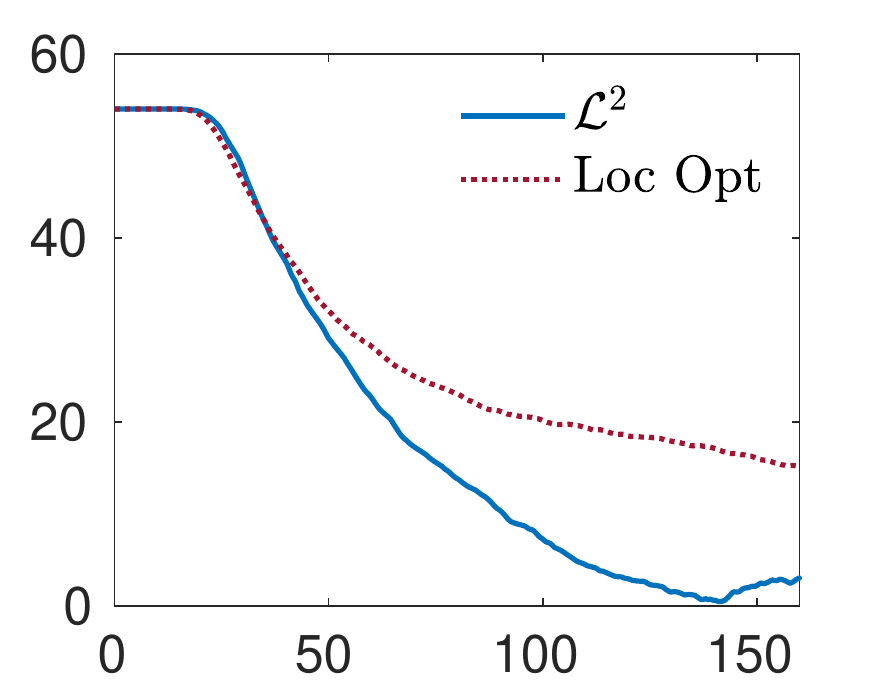}}
 \caption{Avg. parameter estimation errors with time - with
 consensus.}
 \label{fig:avgparerror_consensus_rgb}
\end{figure}

\begin{figure}
 \centering
 \subfloat[Parameter $1$]{
 \includegraphics
 [height=0.22\textheight, width=0.5\textwidth]
 {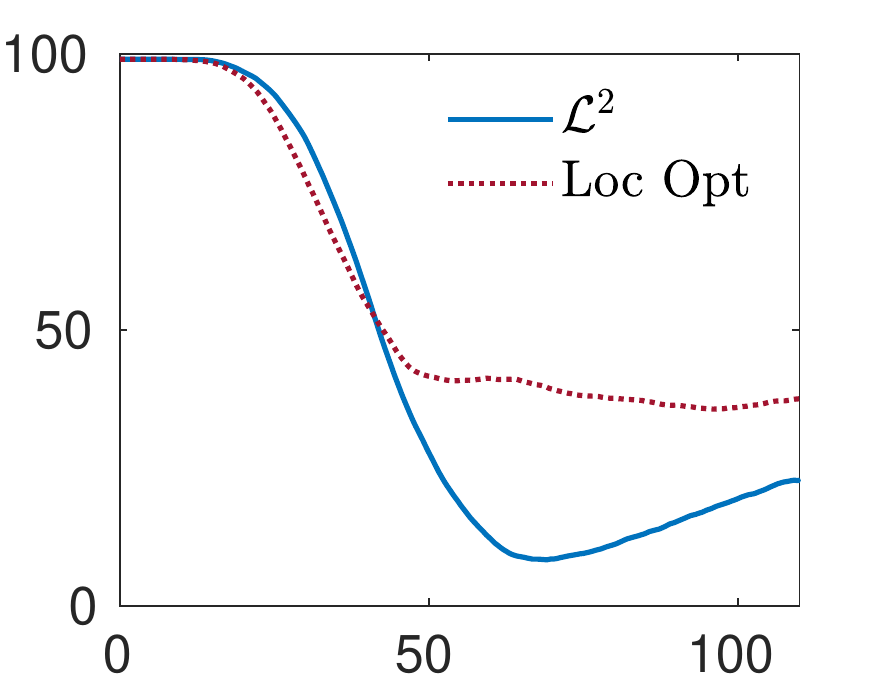}}
 \subfloat[Parameter $2$]{
 \includegraphics
 [height=0.22\textheight, width=0.5\textwidth]
 {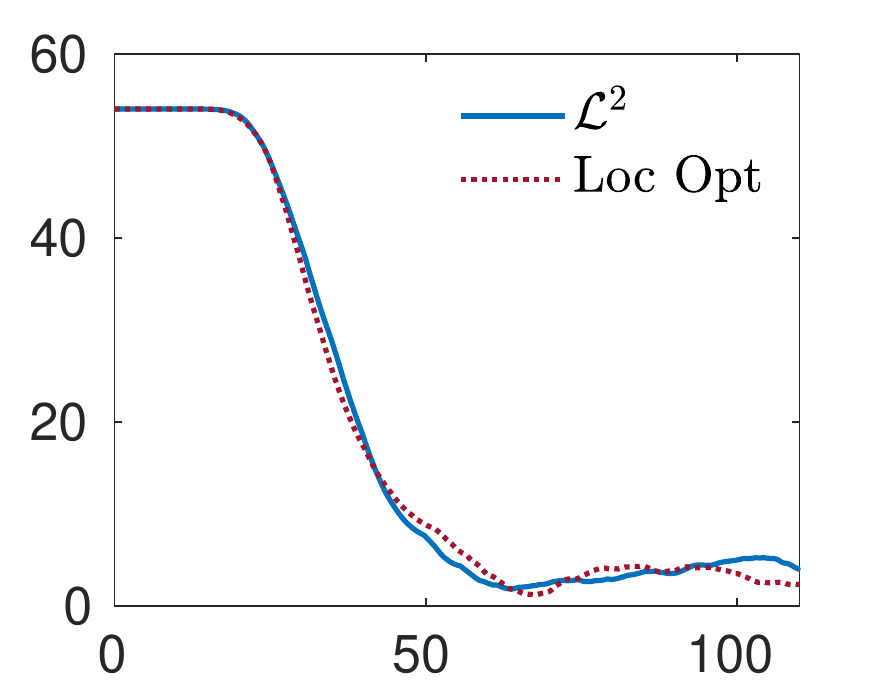}}
 \caption{Avg. parameter estimation errors with time - with
 directed consensus.}
 \label{fig:avgparerror_dirconsensus_rgb}
\end{figure}

\section{Conclusion}
\label{sec:conclusions}
We have looked at an alternative framework for defining the
coverage problem. The sensing quality of each agent was quantified
as the agent sensing function, and an aggregate sensing function
was formed. The coverage problem was then defined as the
minimization of some distance between the aggregate function and
the density function. We showed that the locational optimization
problem can be viewed as a special case of this framework using the
K-L divergence as the distance measure. We also looked at the
$\mathcal{L}^2$ distance as a metric for coverage, and compared
the performance of $\mathcal{L}^2$ coverage with the locational
optimization based coverage.

%


\bibliographystyle{spbasic}
\bibliography{mybibfile.bib}   


\end{document}